\newcommand{\RR}{\mathbb{R}}
\newcommand{\EE}{\mathbb{E}}
\newcommand{\PP}{\mathbb{P}}
\newcommand{\eps}{\varepsilon}
\newcommand{\1}{\mathbf{1}}
\newcommand{\mykill}[1]{}
\theoremstyle{plain}
\newtheorem{theorem}{Theorem}[section]
\newtheorem{proposition}[theorem]{Proposition}
\newtheorem{lemma}[theorem]{Lemma}
\newtheorem{corollary}[theorem]{Corollary}
\theoremstyle{definition}
\newtheorem{remark}[theorem]{Remark}
\theoremstyle{remark}
\begin{document}

\title{\vspace{-1em}
On the Guyon--Lekeufack Volatility Model\footnote{We thank Julien Guyon for presenting us with the question answered in this paper, and two anonymous referees for their suggestions which substantially improved the results.}}
\date{\today}
\author{
  Marcel Nutz%
  \thanks{
  Depts.\ of Statistics and Mathematics, Columbia University, mnutz@columbia.edu. Research supported by NSF Grant DMS-2106056.}
  \and
  Andr{\'e}s Riveros Valdevenito%
  \thanks{Department of Statistics, Columbia University, ar4151@columbia.edu.}
  }
  
\maketitle \vspace{-1.2em}

\begin{abstract}
Guyon and Lekeufack recently proposed a path-dependent volatility model and documented its excellent performance in fitting market data and capturing stylized facts. The instantaneous volatility is modeled as a linear combination of two processes, one is an integral of weighted past price returns and the other is the square-root of an integral of weighted past squared volatility. Each of the weightings is built using two exponential kernels reflecting long and short memory. Mathematically, the model is a coupled system of four stochastic differential equations. Our main result is the wellposedness of this system: the model has a unique strong (non-explosive) solution for all parameter values. We also study the positivity of the resulting volatility process and the martingale property of the associated exponential price process.
\end{abstract}

\vspace{.3em}

{\small
\noindent \emph{Keywords} Path-dependent volatility model, SDE, wellposedness, explosion

\noindent \emph{AMS 2010 Subject Classification}
60H10; %
91G20  %
\emph{JEL Code}
G13 %
}
\vspace{.6em}

\maketitle

\section{Introduction}

Path-dependent volatility models (PDV) are stochastic models for security prices where the instantaneous volatility is a function of the price path. Starting with \cite{HobsonRogers.98,FoschiPascucci.08,Zumbach.10,ChicheporticheBouchaud.14,Guyon.14b}, such models  emphasize that prices have a feedback on volatility (e.g., the leverage effect) rather than the volatility being an exogenous factor driving the price. In their recent paper \cite{GuyonLekeufack.22}, Guyon and Lekeufack empirically study the volatility of the S\&P~500 index (and other indexes) and conclude that the majority of the variation can be explained by past index returns. Indeed, the relevant statistics are 1.\ weighted sum of past daily returns and 2.\ square-root of weighted sum of past daily squared returns (i.e., squared volatility). More specifically, long and short memory are both found to be important, hence the authors recommend using two decay kernels with different time scales. This leads to four processes feeding into the volatility: weighted sum of past returns at two timescales (indexed as (1,0) and (1,1) below) and weighted sum of past squared returns  at two (different) timescales (indexed as (2,0) and (2,1) below).

For practical purposes, \cite{GuyonLekeufack.22} finds that exponential kernels provide a tractable model with good fit. This leads the authors to propose a Markovian model with nine parameters, called the Markovian 4-factor PDV model. They convincingly argue that this model captures the important stylized facts of volatility, produces realistic price and volatility paths, and can jointly fit S\&P~500 and VIX smiles.
Specifically, the volatility process of the 4-factor PDV model is given as 
\[
  \sigma_{t} = \beta_{0} + \beta_{1}R_{1,t} + \beta_{2}\sqrt{R_{2,t}}\,.
\]
Here $R_{1,t}$ is the convex combination $(1 - \theta_{1})R_{1,0,t} + \theta_{1}R_{1,1,t}$ of the past returns weighted with different decay rates~$\lambda_{1,j}$; i.e., $R_{1,j,t}$ is an Ornstein--Uhlenbeck process $dR_{1,j,t}  = \lambda_{1,j}\sigma_{t}dW_{t} - \lambda_{1,j}R_{1,j,t}dt$ for $j\in\{0,1\}$. Moreover, $R_{2,t}$ is a convex combination $(1 - \theta_{2})R_{2,0,t} + \theta_{2}R_{2,1,t}$ of the past squared volatility weighted with different decay rates~$\lambda_{2,j}$; i.e., $R_{2,j,t}$ is an exponential moving average $dR_{2,j,t}  = \lambda_{2,j}\left(\sigma_{t}^{2} - R_{2,j,t} \right) dt$ for $j\in\{0,1\}$. 

Altogether, this leads to a coupled SDE system for the four processes $(R_{i,j,t})$, stated as~\eqref{4factor} below. Due to the square and square-root terms in the dynamics, its wellposedness is not obvious. Most importantly, it is not clear if the system explodes in finite time (the numerical simulations in \cite{GuyonLekeufack.22} truncate the volatility at a fixed upper bound). The purpose of this paper is to provide existence and uniqueness for~\eqref{4factor}. %
Our results extend to other models where the relationship between $(R_{1,t},R_{2,t})$ and $\sigma_{t}$ has a more general form satisfying certain regularity and growth properties (see \cref{2factorgen,4factorgen,2factortildegen}).

The main results are summarized in the subsequent section. There, we first discuss a simpler model which uses only one timescale for each process $R_{i,t}$, corresponding to the special case $\theta_{i}\in\{0,1\}$. While \cite{GuyonLekeufack.22} details that this 2-factor model does not provide a good fit in practice, the authors find it useful to gain intuition about the more complicated 4-factor model. Following their didactic lead, we first prove our results for the 2-factor model in \cref{se:2factor}. In this case, the equations are simpler and the algebraic expressions clearly motivate our strategy of proof. Guided by those insights, the 4-factor model can be treated using a similar strategy (detailed in \cref{se:4factor}), though the expressions are more convoluted. \Cref{se:martprop} concludes by studying the martingale property of the exponential local martingale (price) process associated with the volatility models, a  problem posed to us by an anonymous referee.

\subsection{Main Results}

The 2-factor model of \cite{GuyonLekeufack.22} is specified by an SDE driven by a standard Brownian motion~$W$,
\begin{align}
	\sigma_{t} & = \beta_{0} + \beta_{1}R_{1,t} + \beta_{2}\sqrt{R_{2,t}}  \nonumber \\
	dR_{1,t} & = \lambda_{1}\sigma_{t}dW_{t} - \lambda_{1}R_{1,t}dt  \label{2factor} \tag{\mbox{2-PDV}} \\
	dR_{2,t} & = \left(\lambda_{2}\sigma_{t}^{2} - \lambda_{2}R_{2,t} \right) dt  \nonumber
\end{align}
with parameters
\begin{align*}
  \beta_{0}, \beta_{2}, \lambda_{1},\lambda_{2} \geq 0 \quad\mbox{and}\quad \beta_{1} \leq 0
\end{align*} 
and initial values  $R_{1,0}\in \RR$ and  $R_{2,0}\in(0,\infty)$.
The above is an autonomous SDE for the processes $(R_{1,t},R_{2,t})$, with $\sigma_{t}$ merely acting as an abbreviation. On the other hand, if $\sigma_{t}$ is given, the equations for $R_{1,t}$ and $R_{2,t}$ in~\eqref{2factor} are straightforward: $R_{1,t}$ is an Ornstein--Uhlenbeck process driven by the log-returns $\sigma_{t}dW_{t}$, 
\begin{equation*}
  R_{1,t} =  R_{1,0}e^{-\lambda_{1}t} + \lambda_{1}\int_{0}^{t}e^{-\lambda_{2}(t-s)}\sigma_{s}dW_{s},
\end{equation*}
and $R_{2,t}$ is an exponential moving average of $\sigma_{t}^{2}$,
\begin{equation}\label{r2positive}
  R_{2,t} =  R_{2,0}e^{-\lambda_{2}t} + \lambda_{2}\int_{0}^{t}\sigma_{s}^{2}e^{-\lambda_{2}(t-s)}ds > 0.
\end{equation}
In particular, the expression $\sqrt{R_{2,t}}$ in~\eqref{2factor} is well-defined.%

\begin{theorem}\label{th:main2}
  The 2-factor model~\eqref{2factor} has a unique strong solution.
\end{theorem}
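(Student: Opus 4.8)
The plan is to recast \eqref{2factor} as an autonomous two-dimensional SDE for $(R_{1,t},R_{2,t})$ and combine the standard theory for locally Lipschitz coefficients with a Lyapunov (energy) estimate that rules out explosion. Writing $\sigma$ for the abbreviation $\beta_{0}+\beta_{1}r_{1}+\beta_{2}\sqrt{r_{2}}$, the coefficients
\[
  b(r_{1},r_{2})=\bigl(-\lambda_{1}r_{1},\ \lambda_{2}\sigma^{2}-\lambda_{2}r_{2}\bigr),\qquad \Sigma(r_{1},r_{2})=\bigl(\lambda_{1}\sigma,\ 0\bigr)
\]
are smooth on the open half-plane $U=\RR\times(0,\infty)$, the only non-smoothness being the square root at $r_{2}=0$. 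Hence $b$ and $\Sigma$ are locally Lipschitz on $U$, and classical results give a unique strong solution up to the exit time $\tau=\lim_{n}\tau_{n}$, where $\tau_{n}$ is the first time $(R_{1},R_{2})$ leaves the set $U_{n}=\{|r_{1}|<n,\ 1/n<r_{2}<n\}$, whose closure is compact in $U$. At $\tau$ the solution either explodes, $|R_{1,t}|\vee R_{2,t}\to\infty$, or reaches the boundary $r_{2}=0$; both must be excluded.

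The boundary $r_{2}=0$ is never reached: on $[0,\tau)$ the second equation yields the exponential-moving-average representation \eqref{r2positive}, whence $R_{2,t}\ge R_{2,0}e^{-\lambda_{2}t}>0$. Thus if $\tau<\infty$ one has $\liminf_{t\uparrow\tau}R_{2,t}>0$, so the only possible blow-up at $\tau$ is explosion to infinity.

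To exclude explosion I would use the Lyapunov function $V(r_{1},r_{2})=r_{1}^{2}+r_{2}$. Itô's formula gives
\[
  dV(R_{1,t},R_{2,t})=\bigl(-2\lambda_{1}R_{1,t}^{2}+(\lambda_{1}^{2}+\lambda_{2})\sigma_{t}^{2}-\lambda_{2}R_{2,t}\bigr)\,dt+2\lambda_{1}R_{1,t}\sigma_{t}\,dW_{t}.
\]
The heart of the argument is that, although $\sigma_{t}^{2}$ enters the drift of $R_{2}$ and is quadratic in $R_{1}$, it grows only at the order of $V$ itself: by $(x+y+z)^{2}\le 3(x^{2}+y^{2}+z^{2})$ one has $\sigma_{t}^{2}\le 3\beta_{0}^{2}+3\beta_{1}^{2}R_{1,t}^{2}+3\beta_{2}^{2}R_{2,t}$. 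Substituting this bound, each of the resulting $R_{1}^{2}$ and $R_{2}$ coefficients is a fixed constant, so the drift of $V$ is dominated by $C\bigl(1+V(R_{1,t},R_{2,t})\bigr)$ for a constant $C$ depending only on the parameters. This linear-growth control is exactly what prevents the feedback between the $\sqrt{R_{2}}$ in the volatility and the $\sigma^{2}$ in the drift of $R_{2}$ from producing a finite-time blow-up; notably it requires no sign conditions on the parameters.

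Finally I would run the standard localization and Gronwall argument: stopping at $\tau_{n}$ and taking expectations kills the martingale term, and the drift bound gives $\EE[V(R_{1,t\wedge\tau_{n}},R_{2,t\wedge\tau_{n}})]\le\bigl(V(R_{1,0},R_{2,0})+Ct\bigr)e^{Ct}$ uniformly in $n$. By the lower bound $R_{2,t}\ge R_{2,0}e^{-\lambda_{2}t}$, for $n$ large the exit time $\tau_{n}$ is triggered only by $|R_{1}|$ or $R_{2}$ reaching $n$, so $V\ge n$ on $\{\tau_{n}\le t\}$, whence $\PP(\tau_{n}\le t)\le C_{t}/n\to 0$. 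Thus $\tau_{n}\uparrow\infty$, so $\tau=\infty$ and the local solution is global; patching the pathwise-unique local solutions yields the unique strong solution on $[0,\infty)$. I expect the main obstacle to be this non-explosion step, and specifically the recognition that $V=R_{1}^{2}+R_{2}$ is the correct energy: its quadratic dependence on $R_{1}$ matches the quadratic growth of $\sigma^{2}$, keeping the generator linear in $V$, whereas a faster- or slower-growing choice in $R_{1}$ would fail to close the estimate.
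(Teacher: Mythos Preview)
Your proposal is correct and follows essentially the same route as the paper: both use the energy $V=R_{1}^{2}+R_{2}$, bound $\sigma^{2}$ linearly in $V$ via $(a+b+c)^{2}\le 3(a^{2}+b^{2}+c^{2})$, and close the non-explosion step with localization and Gr\"onwall. The only minor difference is in local wellposedness: the paper establishes weak existence from continuity and then proves pathwise uniqueness with an extra stopping time $T_{\varepsilon}=\inf\{t:R_{2,t}\le\varepsilon\}$ (invoking Yamada--Watanabe for strong existence), whereas you more directly invoke local Lipschitz theory on the open half-plane $\{r_{2}>0\}$ and use the lower bound $R_{2,t}\ge R_{2,0}e^{-\lambda_{2}t}$ to keep the solution there.
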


The proof is detailed in \cref{se:2factor}. There, we first observe that strong existence and uniqueness readily hold up to a possible explosion time, and then proceed to show the absence of explosions in finite time. \Cref{th:main2} extends to certain more general models (\cref{2factorgen}). \Cref{ta:2factorParams} reports the parameters used in~\cite{GuyonLekeufack.22}. In \cref{2sigmapos}, we provide the condition $\lambda_{2} < 2\lambda_{1}$ ensuring $\sigma_{t}>0$; that condition is satisfied by the values in \Cref{ta:2factorParams}.

\begin{table}[htb]
\centering
\begin{tabular}{|c|c|c|c|c|}
\hline
$\beta_0$ & $\beta_1$ & $\beta_2$ & $\lambda_1$ & $\lambda_2$ \\ \hline
 0.08     & {-0.08}   & 0.5       & 62          & 40         \\ \hline
\end{tabular}
\caption{Example parameters for the 2-factor model from \cite[Table~7]{GuyonLekeufack.22}}
\label{ta:2factorParams}
\end{table}

Next, we move on to the 4-factor model of \cite{GuyonLekeufack.22}. It is specified by the SDE
\begin{align}
	\sigma_{t} & = \beta_{0} + \beta_{1}R_{1,t} + \beta_{2}\sqrt{R_{2,t}} \nonumber \\
	R_{1,t} & = (1 - \theta_{1})R_{1,0,t} + \theta_{1}R_{1,1,t} \nonumber \\
	R_{2,t} & = (1 - \theta_{2})R_{2,0,t} + \theta_{2}R_{2,1,t} \label{4factor}\tag{\mbox{4-PDV}} \\
	dR_{1,j,t} & = \lambda_{1,j}\sigma_{t}dW_{t} - \lambda_{1,j}R_{1,j,t}dt, \quad j \in \{0,1\} \nonumber \\
	dR_{2,j,t} & = \lambda_{2,j}\left(\sigma_{t}^{2} - R_{2,j,t} \right) dt, \quad j \in \{0,1\} \nonumber 
\end{align}
with parameters
\begin{align*}
  \beta_{0}, \beta_{2}, \lambda_{1,j},\lambda_{2,j} \geq 0, \qquad \beta_{1} \leq 0, \qquad \theta_{1}, \theta_{2} \in [0,1]
\end{align*} 
and initial values  $R_{1,j,0} \in \RR$ and  $R_{2,j,0} > 0$, $j \in \{0,1\}$.
The above is an autonomous SDE for the four processes $(R_{1,j,t},R_{2,j,t})_{j \in \{0,1\}}$. We note that~\eqref{4factor} generalizes the 2-factor model~\eqref{2factor}; the latter is recovered when $\theta_{1}, \theta_{2} \in \{0,1\}$. Whereas for $\theta_{1}, \theta_{2} \in (0,1)$, the difference with~\eqref{2factor} is that $R_{1,t},R_{2,t}$ are proper convex combinations of processes with different time scales. Once again, $R_{1,j,t}$ and $R_{2,j,t}$ have straightforward expressions once $\sigma_{t}$ is given, and $R_{2,j,t}>0$ as in \eqref{r2positive}. Our main result reads as follows.

\begin{theorem}\label{th:main4}
   The 4-factor model~\eqref{4factor} has a unique strong solution.
\end{theorem}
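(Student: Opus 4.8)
The plan is to follow the two-step strategy announced for \cref{th:main2}: first obtain a unique strong solution up to an explosion time, and then rule out explosion by a Lyapunov argument. Since each $R_{2,j}$ is an exponential moving average of $\sigma^2$, it satisfies the analogue of~\eqref{r2positive}, namely $R_{2,j,t}\ge R_{2,j,0}e^{-\lambda_{2,j}t}>0$; hence on any interval $[0,T]$ the processes $R_{2,j}$ are bounded below away from $0$, the coefficients of~\eqref{4factor} are locally Lipschitz on the state space $\mathcal O:=\RR^2\times(0,\infty)^2$, and standard SDE theory furnishes a unique strong solution $X:=(R_{1,0},R_{1,1},R_{2,0},R_{2,1})$ up to the explosion time $\tau$ (the first time $X$ leaves every compact subset of $\mathcal O$). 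The lower bound on $R_{2,j}$ shows that $X$ cannot leave $\mathcal O$ through the boundary $\{R_{2,j}=0\}$, so it remains only to exclude $|X|\to\infty$ in finite time.

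For this I would use the polynomial Lyapunov function $V:=R_{1,0}^2+R_{1,1}^2+R_{2,0}+R_{2,1}$, which tends to $+\infty$ as $|X|\to\infty$ on $\mathcal O$. Writing $\mathcal L$ for the generator of~\eqref{4factor}, a direct computation (using that $V$ is a diagonal quadratic form in $(R_{1,0},R_{1,1})$ and linear in $(R_{2,0},R_{2,1})$) gives
\[
  \mathcal L V = -2\sum_{j}\lambda_{1,j}R_{1,j}^2 - \sum_{j}\lambda_{2,j}R_{2,j} + \Big(\sum_{j}\lambda_{1,j}^2+\sum_{j}\lambda_{2,j}\Big)\sigma^2 .
\]
The crucial estimate is the at-most-affine bound $\sigma^2\le C(1+V)$: since $R_1$ and $R_2$ are convex combinations with weights in $[0,1]$, convexity of $x\mapsto x^2$ yields $R_1^2\le R_{1,0}^2+R_{1,1}^2$, and trivially $R_2\le R_{2,0}+R_{2,1}$, so that $\sigma^2\le 3\beta_0^2+3\beta_1^2(R_{1,0}^2+R_{1,1}^2)+3\beta_2^2(R_{2,0}+R_{2,1})$. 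Substituting this into the display yields $\mathcal L V\le K(1+V)$ for a constant $K$. Khasminskii's non-explosion criterion—Itô's formula applied to $V(X)$ on the stopped process, followed by Gronwall's inequality and letting the stopping level tend to infinity—then gives $\tau=\infty$ almost surely, which completes the proof.

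The main obstacle is the superlinear growth in the dynamics: the drift of $R_{2,j}$ and the diffusion coefficient of $R_{1,j}$ both grow like $\sigma^2$, hence quadratically in $R_1$, which is precisely the kind of term that drives scalar equations such as $\dot x=x^2$ to blow up in finite time. The point of the above choice of $V$ is that these quadratic terms enter $\mathcal L V$ only \emph{linearly} in $V$: the $R_{2,j}$-drift is differentiated against the constant $\partial_{R_{2,j}}V=1$, and the $\sigma^2$-diffusion is contracted against the constant Hessian entry $\partial_{R_{1,j}}^2V=2$, so that no term of order $V^{3/2}$ or higher ever appears. The two features new relative to \cref{th:main2} turn out to be cosmetic rather than structural: the two timescales are absorbed by the convexity bound above, and the fact that $R_{1,0},R_{1,1}$ are driven by the same Brownian motion (so their diffusion matrix is rank one) is rendered harmless by taking $V$ diagonal, which annihilates the cross second-derivative $\partial_{R_{1,0}R_{1,1}}V$. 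Finally, I note that the sign condition $\beta_1\le0$ plays no role in this argument—only $\beta_1^2$ enters the estimate—so non-explosion in fact holds for every sign of $\beta_1$.
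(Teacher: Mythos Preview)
Your proposal is correct and matches the paper's proof essentially step for step: the paper uses exactly the same Lyapunov quantity $U_{t}=R_{1,0,t}^{2}+R_{1,1,t}^{2}+R_{2,0,t}+R_{2,1,t}$, the same convexity bounds $R_{1}^{2}\le R_{1,0}^{2}+R_{1,1}^{2}$ and $R_{2}\le R_{2,0}+R_{2,1}$, and the same It\^o--Gr\"onwall argument on the stopped process (your Khasminskii formulation is just a more compact packaging of this). Your remark that only $\beta_{1}^{2}$ enters is also consistent with the paper, which never uses the sign of~$\beta_{1}$ in the non-explosion proof.
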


The proof is stated in \cref{se:4factor}. Again, strong existence and uniqueness readily hold up to a possible explosion time, and we prove absence of explosions in finite time. \Cref{th:main4} extends to certain more general models (\cref{4factorgen}). The parameters used in~\cite{GuyonLekeufack.22} are reproduced in \Cref{ta:4factorParams}. 
While in the 2-factor model, $\sigma_t$ remains strictly positive for a certain parameter range (\cref{2sigmapos}), that property can fail in the 4-factor model (\cref{pr:4sigmaposFail}).

\begin{table}[htb]
\centering
\begin{tabular}{|c|c|c|c|c|c|c|c|c|}
\hline
$\beta_0$ & $\beta_1$ & $\beta_2$ & $\lambda_{1,0}$ & $\lambda_{1,1}$ & $\lambda_{2,0}$ & $\lambda_{2,1}$ & $\theta_{1}$ & $\theta_{2}$\\ \hline
 0.04     & {-0.13}   & 0.65      & 55              & 10              &      20         &       3         & 0.25         &   0.5   \\ \hline
\end{tabular}
\caption{Example parameters for the 4-factor model from \cite[Table~8]{GuyonLekeufack.22}}
\label{ta:4factorParams}
\end{table}

Finally, we study the martingale property of the resulting price process $(X_{t})_{t \geq 0}$, a  problem posed by an anonymous referee. We provide a positive result for the 2-factor model; the problem remains open for the 4-factor model.
For the sake of generality, we allow processes $(\sigma_{t})_{t \geq 0}$ that can become negative (even if those may be undesirable in practice), but stop them at some level for technical reasons; that gives rise to the volatility process $(\nu_{t})_{t \geq 0}$ in the theorem below. If the process $(\sigma_{t})_{t \geq 0}$ is nonnegative, as is guaranteed for the parameters mentioned in~\cref{2sigmapos}, then clearly $\nu_{t}=\sigma_{t}$.
  
\begin{theorem}\label{martprop}
  Let $(\sigma_{t})_{t\geq0}$ be given by~\eqref{2factor} and $\nu_{t}:=\sigma_{t\wedge \tau}$ where $\tau = \inf \left\{ t \geq 0:\sigma_{t} < -C \right\}$ for some $C\geq0$. The exponential local martingale $(X_{t})_{t\geq 0}$ given by 
\begin{align*}
    dX_{t} = \nu_{t}X_{t}dW_{t}, \quad X_{0} = x_{0} > 0
\end{align*}
  is a true martingale.
\end{theorem}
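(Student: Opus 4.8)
The plan is to verify a Novikov-type or linear-growth condition that guarantees the exponential local martingale is a true martingale. The cleanest route is to use the standard sufficient criterion (e.g.\ via \cite[Corollary~3.5.16]{KaratzasShreve.91} or Beneš's condition): if $\EE[\langle \log X\rangle_{T}]<\infty$ for all $T$ fails to be directly checkable, it suffices to show that the volatility $\nu_{t}$ has at most linear growth in the driving Brownian path, so that a linear-growth SDE comparison controls the exponential moments. Since $\nu_{t}=\sigma_{t\wedge\tau}$ is stopped at the level $\sigma<-C$, we have the two-sided bound $-C\le\nu_{t}$ on one side automatically; the work is to bound $\nu_{t}$ from above in a way that yields integrability of $\exp(\tfrac12\int_{0}^{T}\nu_{t}^{2}\,dt)$.

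First I would reduce the martingale property to a moment estimate. By a localization argument, $(X_{t})$ is a true martingale on $[0,T]$ as soon as $\EE\big[\exp(\tfrac12\int_{0}^{T}\nu_{s}^{2}\,ds)\big]<\infty$ (Novikov), or more robustly, as soon as $\sup_{t\le T}\EE[\nu_{t}^{2}]$ and the relevant exponential integrals are finite; I would actually aim for the weaker but more tractable condition that $\nu$ grows at most linearly in the state, invoking Beneš. Concretely, recall $\sigma_{t}=\beta_{0}+\beta_{1}R_{1,t}+\beta_{2}\sqrt{R_{2,t}}$ with $\beta_{1}\le 0$, so $\sigma_{t}\le \beta_{0}+\beta_{2}\sqrt{R_{2,t}}$ from above (the $R_{1}$ term only helps, being nonpositive when $R_{1,t}\ge0$; more carefully one splits on the sign of $R_{1,t}$). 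The key observation is that $R_{2,t}$ is an exponential moving average of $\sigma_{s}^{2}$, so an upper bound on $\sigma^{2}$ feeds back into $R_{2}$, giving a self-consistent differential inequality for $R_{2,t}$.

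Next I would make this feedback rigorous. Using the explicit representation $R_{2,t}=R_{2,0}e^{-\lambda_{2}t}+\lambda_{2}\int_{0}^{t}\sigma_{s}^{2}e^{-\lambda_{2}(t-s)}\,ds$ and $\sigma_{s}^{2}\le 2\beta_{0}^{2}+2\beta_{2}^{2}R_{2,s}$ (again discarding the helpful $\beta_{1}R_{1}$ term and using $\sigma_s^2\le (\beta_0+\beta_2\sqrt{R_{2,s}})^2$ when $\sigma_s\ge 0$, with the stopped/bounded-below regime handled separately), I obtain an integral inequality of the form $R_{2,t}\le A+B\int_{0}^{t}R_{2,s}\,ds$ with deterministic constants $A,B$. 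Grönwall's inequality then yields a \emph{deterministic} exponential-in-time upper bound $R_{2,t}\le R_{2,0}e^{ct}+(\text{const})$, hence a deterministic bound $\nu_{t}^{2}\le a+b\,e^{ct}$ on the whole path. Since this bound is nonrandom, $\int_{0}^{T}\nu_{s}^{2}\,ds$ is bounded by a finite deterministic constant, so Novikov's condition $\EE[\exp(\tfrac12\int_{0}^{T}\nu_{s}^{2}ds)]<\infty$ holds trivially, and the martingale property on each $[0,T]$ follows; letting $T\to\infty$ gives the global statement.

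The main obstacle is the interplay between the sign of $\beta_{1}R_{1,t}$ and the lower stopping at $-C$: when establishing an \emph{upper} bound on $\sigma_{t}^{2}$, the term $\beta_{1}R_{1,t}$ is helpful only when $R_{1,t}\ge0$, and I must ensure the bound $\sigma_{s}^{2}\le 2\beta_{0}^{2}+2\beta_{2}^{2}R_{2,s}$ is valid irrespective of the sign of $R_{1,t}$ (it is, since dropping a nonpositive contribution $\beta_1 R_{1,t}$ when $R_{1,t}\ge 0$ only increases $\sigma_t$, while when $R_{1,t}<0$ one must check the crossed term does not spoil the estimate—here the stopping $\tau$ and the bound $\sigma_t\ge -C$ keep the square controlled). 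I expect this sign bookkeeping, rather than any deep probabilistic difficulty, to be the delicate part; once the deterministic bound on $R_{2,t}$ is in hand, the conclusion is immediate. It is worth noting that this deterministic-bound strategy is exactly why the argument does \emph{not} extend to the 4-factor model, where the two timescales in $R_{2,t}$ obstruct the clean Grönwall feedback.
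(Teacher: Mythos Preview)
There is a genuine gap in the argument: the inequality $\sigma_{s}^{2}\le 2\beta_{0}^{2}+2\beta_{2}^{2}R_{2,s}$ (equivalently $\sigma_{s}\le \beta_{0}+\beta_{2}\sqrt{R_{2,s}}$ when $\sigma_{s}\ge0$) is simply \emph{false} when $R_{1,s}<0$. In that case $\beta_{1}R_{1,s}>0$ (since $\beta_{1}\le0$), so $\sigma_{s}=\beta_{0}+\beta_{1}R_{1,s}+\beta_{2}\sqrt{R_{2,s}}>\beta_{0}+\beta_{2}\sqrt{R_{2,s}}$, and the term $\beta_{1}R_{1,s}$ pushes $\sigma_{s}$ \emph{up}, not down. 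The stopping at level $-C$ bounds $\sigma_{t}$ only from below and is of no help with an \emph{upper} bound. Because $R_{1,s}$ contains the stochastic integral $\lambda_{1}\int_{0}^{s}e^{-\lambda_{1}(s-u)}\sigma_{u}\,dW_{u}$, it is not deterministically bounded, and hence no deterministic bound on $\sigma_{s}^{2}$ in terms of $R_{2,s}$ alone is available. The Gr\"onwall feedback for $R_{2,t}$ therefore cannot close, and Novikov's condition is not accessible by this route. (Moment bounds such as $\sup_{t\le T}\EE[\sigma_{t}^{2}]<\infty$, which do hold by \cref{2factornonexplosion}, are far too weak for Novikov.)

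The paper proceeds entirely differently. Instead of attempting Novikov, it uses the change-of-measure characterization of \cite{Sin.98,JarrowProtterSanMartin.22}: $(X_{t})$ is a martingale iff the tilted probabilities $\tilde{\PP}_{M,t}(T_{M}<t)$ vanish as $M\to\infty$ (\cref{le:trueMartChar}). Under $\tilde{\PP}_{M,t}$, Girsanov adds a drift $\lambda_{1}\sigma_{s}^{2}\,ds$ to the equation for $R_{1}$, yielding the system~\eqref{2factortilde}. The crux (\cref{le:tiltedPDV2}) is to show this tilted process does not explode above level $M$; the key idea is to choose $\hat{\beta}_{2}>0$ with $\beta_{1}\lambda_{1}+\hat{\beta}_{2}\lambda_{2}<0$ and bound $\EE[\beta_{1}R_{1,t\wedge S_{M}}+\hat{\beta}_{2}R_{2,t\wedge S_{M}}]$: the integrand becomes a quadratic in $(R_{1,s},\sqrt{R_{2,s}})$ with \emph{negative} leading coefficients, hence bounded above by a constant. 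This exploits the sign $\beta_{1}<0$ in an essential way and is where the 4-factor extension breaks down (the corresponding quadratic form is no longer negative definite, cf.\ \cref{rk:tiltedPDV4})---not the Gr\"onwall mechanism you describe.
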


The proof is reported in \cref{se:martprop}. Following an idea in~\cite{Sin.98} and~\cite{JarrowProtterSanMartin.22}, we characterize the martingale property of $(X_{t})_{t \geq 0}$ as the non-explosiveness of $(\nu_{t})_{t \geq 0}$ under a changed measure, and then prove the latter by an estimate for the associated stochastic differential equation. This line of argument may extend to the 4-factor model, but the present argument for non-explosiveness in the 2-factor case does not apply in the 4-factor case (see \cref{rk:tiltedPDV4}).

\section{Analysis of the 2-Factor Model~\eqref{2factor}}\label{se:2factor}

We first show, using fairly standard arguments, that~\eqref{2factor} has a unique strong solution  up to a possible explosion time. Then, we prove the absence of explosions. In \cref{se:2factorpos}, we study the positivity of $\sigma_{t}$.

\subsection{Wellposedness and Absence of Explosions}\label{se:2factorWellposed}

To detail the aforementioned arguments, we introduce a more concise notation for~\eqref{2factor}: writing $R_{t} := \left( R_{1,t}, R_{2,t} \right)$, we can rewrite~\eqref{2factor} as 
\begin{gather*}
	dR_{t}  = b(R_{t})dt + \nu(R_{t})dW_{t}, \qquad R_{0}  = (R_{1,0}, R_{2,0}) \\[.3em]  %
	\nu(x,y) = \begin{pmatrix} \lambda_{1}(\beta_{0} + \beta_{1}x + \beta_{2}\sqrt{y}) \\
	0 	
	\end{pmatrix} \\[.3em]
	b(x,y) = \begin{pmatrix} -\lambda_{1}x \\
		\lambda_{2} \left(\beta^{2}_{0} + \beta^{2}_{1}x^{2} + (\beta^{2}_{2} - 1)y  + 2\beta_{0}\beta_{1}x + 2\beta_{0}\beta_{2}\sqrt{y} + 2\beta_{1}\beta_{2}x\sqrt{y}\right)
	\end{pmatrix}.
\end{gather*}
As the coefficients $\nu(x,y)$ and $b(x,y)$ are continuous in their domains and the initial condition is deterministic, the general existence result of \cite[Theorem IV.2.3]{IkedaWatanabe.89} (applied with $\sqrt{y}:=0$ for $y<0$) shows the following.

\begin{lemma} \label{weakexist} The SDE \eqref{2factor} has a weak solution up to a possible explosion time.
\end{lemma}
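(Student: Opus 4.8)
The plan is to invoke the general weak existence theorem for multidimensional SDEs with continuous coefficients, namely \cite[Theorem IV.2.3]{IkedaWatanabe.89}, after first putting~\eqref{2factor} in the standard autonomous form $dR_t = b(R_t)\,dt + \nu(R_t)\,dW_t$ as already displayed above. The only conceptual subtlety is that the square-root term makes $\nu$ and $b$ a priori defined only on the half-space $\{y \ge 0\}$, so to apply a theorem stated for coefficients on all of $\RR^2$ I would first extend the coefficients continuously to the whole plane. The natural extension is to set $\sqrt{y} := 0$ for $y < 0$, which is exactly the convention flagged parenthetically in the excerpt; the resulting extended functions $b,\nu:\RR^2\to\RR^2$ (resp.\ $\RR^{2\times 1}$) are then globally continuous.

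With that extension in place, the substantive step is simply to verify the hypotheses of \cite[Theorem IV.2.3]{IkedaWatanabe.89}. That theorem requires only continuity of the coefficients (no Lipschitz or linear-growth condition) and guarantees a weak solution up to the explosion time $e = \lim_n \tau_n$, where $\tau_n$ is the exit time from the ball of radius $n$. Concretely I would check: (i) $b$ is continuous, which holds because $x\mapsto -\lambda_1 x$ is linear and the second component is a polynomial in $x$ together with the continuous functions $\sqrt{y}\,\1_{\{y\ge 0\}}$ and $x\sqrt{y}\,\1_{\{y\ge 0\}}$; and (ii) $\nu$ is continuous by the same reasoning, since $\beta_0+\beta_1 x+\beta_2\sqrt{y}$ is continuous under the extension. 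Continuity at the boundary $\{y=0\}$ is immediate because $\sqrt{y}\to 0$ as $y\downarrow 0$, matching the extended value.

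Finally I would note that the theorem supplies a solution of the SDE with the extended coefficients, and argue that this is in fact a solution of~\eqref{2factor} itself up to explosion. This is because, as recorded in~\eqref{r2positive}, any solution keeps $R_{2,t}>0$ for all $t<e$ once $R_{2,0}>0$: the second component obeys a linear ODE in $y$ driven by the nonnegative source $\lambda_2\sigma_t^2$, so it stays strictly positive and never enters the region $\{y<0\}$ where the extension was artificial. Hence on the relevant path space the extended coefficients coincide with the original ones, and the weak solution provided is genuinely a weak solution of~\eqref{2factor} up to the explosion time, which is the assertion of the lemma. The only mild obstacle here is bookkeeping the distinction between the extended and original coefficients, but the positivity of $R_{2,t}$ resolves it cleanly; there is no hard estimate to carry out at this stage, since the theorem asks for nothing beyond continuity and the absence of explosions is deferred to a later result.
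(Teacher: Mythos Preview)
Your proposal is correct and follows exactly the same approach as the paper: extend the coefficients by setting $\sqrt{y}:=0$ for $y<0$, observe continuity, and invoke \cite[Theorem~IV.2.3]{IkedaWatanabe.89}. The paper's proof is a single sentence, while you spell out the continuity check and the positivity argument~\eqref{r2positive} more explicitly, but the substance is identical.
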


Next, we establish pathwise uniqueness. The usual local Lipschitz condition (e.g., \cite[Theorem~IV.3.1]{IkedaWatanabe.89}) fails because of the term $\sqrt{y}$ in the coefficients. However, as this failure only occurs at the boundary of the relevant domain, a modification of the usual proof applies. 

\begin{lemma} \label{pathwiseUnique} 
The SDE \eqref{2factor} satisfies pathwise uniqueness.
\end{lemma}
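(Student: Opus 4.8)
The plan is to run the classical Gronwall-type pathwise uniqueness argument of \cite[Theorem~IV.3.1]{IkedaWatanabe.89}, localized so as to sidestep the single region where the coefficients fail to be Lipschitz. The only obstruction to the local Lipschitz condition is the map $y\mapsto\sqrt{y}$, and it is problematic only as $y\downarrow 0$. The key observation is that the second component never approaches this boundary: by the explicit representation~\eqref{r2positive}, every solution satisfies $R_{2,t}\geq R_{2,0}e^{-\lambda_{2}t}$, so on a fixed horizon $[0,T]$ one has the deterministic lower bound $R_{2,t}\geq R_{2,0}e^{-\lambda_{2}T}=:\delta>0$. On $\{y\geq\delta\}$ the square root is Lipschitz, since $|\sqrt{y}-\sqrt{y'}|\leq|y-y'|/(2\sqrt{\delta})$, and all remaining terms of $b$ and $\nu$ are polynomials in $x$ and $\sqrt{y}$, hence locally Lipschitz there as well.

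Concretely, let $R=(R_{1},R_{2})$ and $\tilde R=(\tilde R_{1},\tilde R_{2})$ be two solutions driven by the same $W$ with $R_{0}=\tilde R_{0}$, and fix $T>0$. I would introduce the stopping time
\[
  \tau_{N} := \inf\big\{ t\geq 0 : |R_{1,t}| \vee |\tilde R_{1,t}| \vee R_{2,t} \vee \tilde R_{2,t} \geq N \big\},
\]
which confines both paths to the compact set $K_{N}=\{(x,y): |x|\leq N,\ \delta \leq y \leq N\}$ on $[0,T\wedge\tau_{N}]$. On $K_{N}$ the coefficients are Lipschitz with some constant $L_{N}$: the cross term is handled by $|x\sqrt{y}-x'\sqrt{y'}| \leq \sqrt{N}\,|x-x'| + (N/(2\sqrt{\delta}))\,|y-y'|$, and the quadratic and linear terms are Lipschitz on the bounded set as usual. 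Subtracting the two integral equations, applying the It\^o isometry together with the Burkholder--Davis--Gundy inequality to the stochastic integral and Cauchy--Schwarz to the drift integral, one obtains
\[
  \EE\Big[\sup_{s\leq t\wedge\tau_{N}} |R_{s}-\tilde R_{s}|^{2}\Big] \leq C_{N}\int_{0}^{t} \EE\Big[\sup_{u\leq s\wedge\tau_{N}} |R_{u}-\tilde R_{u}|^{2}\Big]\,ds,
\]
and Gronwall's inequality forces $R=\tilde R$ on $[0,T\wedge\tau_{N}]$.

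To conclude, I would let $N\to\infty$. Since any solution is finite up to its explosion time~$e$, we have $\tau_{N}\uparrow e$, whence $R=\tilde R$ on $[0,T\wedge e)$ for every $T$, and therefore on all of $[0,e)$. The only genuinely delicate point is ensuring that the non-Lipschitz boundary $\{y=0\}$ is never reached; this is exactly what the lower bound $R_{2,t}\geq R_{2,0}e^{-\lambda_{2}t}$ (together with the standing assumption $R_{2,0}>0$) provides, and it is what lets the otherwise standard localization-and-Gronwall scheme go through despite the $\sqrt{y}$ nonlinearity.
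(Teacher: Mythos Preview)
Your proposal is correct and follows essentially the same localization-plus-Gronwall scheme as the paper's proof. The only difference is cosmetic: the paper introduces a separate lower stopping time $T_{\eps}=\inf\{t: R_{2,t}\leq\eps\}$ and sends $\eps\to0$ using the positivity in~\eqref{r2positive}, whereas you exploit the explicit deterministic floor $R_{2,t}\geq R_{2,0}e^{-\lambda_{2}t}$ from that same formula to dispense with the lower stopping time altogether---a mild streamlining of the identical idea.
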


\begin{proof}
Following the proof of \cite[Theorem~IV.3.1]{IkedaWatanabe.89}, we consider two solutions $(R, W)$ and $(R', W)$ of~\eqref{2factor} on the same probability space $\left( \Omega, \mathcal{F}, \PP \right)$ with Brownian motion~$W$, and with the same initial values $R_{0} = R'_{0}= (R_{1,0}, R_{2,0})$.

Given $N,\eps>0$, there exists $K_{\eps,N} > 0$ such that 
\begin{align*}
	\left\| \nu(x,y) - \nu(x',y') \right\|^{2} + \left\| b(x,y) - b(x',y') \right\|^{2} \leq K_{\eps,N} \left\| (x,y) - (x',y') \right\|^2
\end{align*}
for all $(x,y),(x',y')\in\RR^{2}$ with $\|(x,y)\|,\|(x',y')\|\leq N$ and $y,y' \geq \eps $. 
Define
\begin{align*}
	S_{N}  := \inf \left\{ t \geq 0 \text{  :  } \left\| R_{t} \right\| \geq N \right\}, \qquad T_{\eps}  := \inf \left\{ t \geq 0 \text{  :  } \left| R_{2,t} \right| \leq \eps \right\}
\end{align*}
and similarly $S_{N}', T_{\eps}'$ for $R'$ instead of $R$.
Set $S_{\eps, N} := S_{N} \wedge S'_{N} \wedge T_{\eps} \wedge T'_{\eps}$ and note that
\begin{align*}
	R_{t \wedge S_{\eps, N}} - R'_{t \wedge S_{\eps, N}} = \int_{0}^{t \wedge S_{\eps, N}} \left(\nu(R_{s} ) - \nu(R'_{s} )\right) dW_{s} + \int_{0}^{t \wedge S_{\eps, N}} \left(b(R_{s} ) - b(R'_{s} )\right) ds.
\end{align*}
Fix $T\in (0,\infty)$. For $t \leq T$, It\^o's isometry and H\"older's inequality yield
\begin{align*}
		&\EE \left(\|R_{t \wedge S_{\eps, N}}  - R'_{t \wedge S_{\eps, N}}\|^{2} \right) \\ 
		   & \leq 2 \EE \left( \left\|\int_{0}^{t \wedge S_{\eps, N}} \left(\nu(R_{s} ) - \nu(R'_{s} )\right) dW_{s} \right\|^{2} \right) + 2 \EE \left( \left\|\int_{0}^{t \wedge S_{\eps, N}} \left(b(R_{s} ) - b(R'_{s} )\right) ds \right\|^{2} \right) \\
		& \leq 2 \EE \left( \int_{0}^{t} \left\|\nu(R_{s \wedge S_{\eps, N}} ) - \nu(R'_{s \wedge S_{\eps, N}} ) \right\|^{2} ds \right) + 2T \EE \left( \int_{0}^{t} \left\|b(R_{s \wedge S_{\eps, N}} ) - b(R'_{s \wedge S_{\eps, N}} ) \right\|^{2}ds  \right) \\
		& \leq 2(1 + T)K_{\eps, N}\EE \left( \int_{0}^{t} \left\|R_{s \wedge S_{\eps, N}}  - R'_{s \wedge S_{\eps, N}} \right\|^{2} ds \right),
\end{align*}
and then Gr\"onwall's inequality shows
\begin{align*}
	\EE \left(\|R_{t \wedge S_{\eps, N}} - R'_{t \wedge S_{\eps, N}}\|^{2} \right) = 0.
\end{align*}
As $T > 0$ was arbitrary, this holds for all $t\geq0$. 

Next, we let $\eps\to0$. In view of the positivity~\eqref{r2positive}, we have $T_{\eps}, T'_{\eps} \to \infty$  and conclude that
\begin{align*}
	R_{t \wedge  S_{N} \wedge S'_{N}} = R'_{t \wedge  S_{N} \wedge S'_{N}} \quad  \forall t \geq 0.
\end{align*}
Together with the continuity of the paths, it follows that $S_{N} = S'_{N}$, and since this holds for all $N>0$, we have shown that $R=R'$ up to a possible (common) time of explosion.
\end{proof}

As weak existence together with pathwise uniqueness implies strong existence \cite[Theorem~IV.1.1]{IkedaWatanabe.89}, the results so far establish the strong wellposedness of~\eqref{2factor} up to explosion.

\begin{corollary} \label{co:wellposed2} The SDE \eqref{2factor} satisfies strong existence and  uniqueness up to a possible explosion time.
\end{corollary}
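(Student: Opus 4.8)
The plan is to invoke the Yamada--Watanabe principle, for which both ingredients are now in hand. \Cref{weakexist} provides a weak solution up to a possible explosion time, and \Cref{pathwiseUnique} establishes pathwise uniqueness. By the classical result recorded as \cite[Theorem~IV.1.1]{IkedaWatanabe.89}, weak existence together with pathwise uniqueness implies strong existence; and since pathwise uniqueness is exactly uniqueness among strong solutions, this delivers the asserted strong wellposedness. The corollary is therefore a direct combination of the two preceding lemmas, as already anticipated in the sentence preceding its statement.

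The only bookkeeping point is the localization at the explosion time, since both lemmas are stated only up to explosion. I would apply the Yamada--Watanabe argument to each stopped system $R_{\cdot \wedge S_N}$, where $S_N := \inf\{t \geq 0 : \|R_t\| \geq N\}$, obtaining a unique strong solution on $[0, S_N)$ for every $N$. These solutions are consistent across $N$ by pathwise uniqueness (in the form established in the proof of \Cref{pathwiseUnique}), so they patch together into a single strong solution defined on $[0, \sup_N S_N)$, that is, up to the explosion time.

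No substantive obstacle is expected here: the analytic work has already been absorbed into \Cref{weakexist,pathwiseUnique}, and what remains is a routine appeal to a standard theorem together with the localization described above. The genuinely hard part of the argument is deferred to the next subsection, where the absence of explosions (i.e.\ $\sup_N S_N = \infty$) must be shown.
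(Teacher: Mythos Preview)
Your proposal is correct and matches the paper's approach exactly: the paper also derives the corollary directly from \cref{weakexist,pathwiseUnique} via the Yamada--Watanabe theorem \cite[Theorem~IV.1.1]{IkedaWatanabe.89}, without a separate proof. Your added remarks on localization are sound and slightly more explicit than the paper, which simply absorbs this into the cited reference.
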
 

Turning to the main contribution of this section, we now show the absence of explosions.

\begin{lemma} \label{2factornonexplosion} %
A solution $(R_{1,t},R_{2,t})$ of~\eqref{2factor} cannot explode in finite time. Moreover, $\sup_{t\leq T}\EE(R^{2}_{1,t} + R_{2,t})<\infty$ for any $T\in[0,\infty)$.
\end{lemma}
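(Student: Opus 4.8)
The plan is to show that the explosion time is almost surely infinite by establishing an a priori second-moment bound that is uniform over the localizing sequence, and then to read off the moment estimate via Fatou's lemma. By \cref{co:wellposed2} the solution exists and is unique up to the explosion time $\tau := \lim_{n}S_{n}$, where $S_{n} = \inf\{t\geq 0 : \|R_{t}\|\geq n\}$; it suffices to prove $\PP(\tau \leq T) = 0$ for every fixed $T\in(0,\infty)$. The crucial structural observation is that, despite the square root in $\sigma_{t}$, the quantity $\sigma_{t}^{2}$ grows only \emph{linearly} in $(R_{1,t}^{2}, R_{2,t})$: by the elementary bound $(a+b+c)^{2}\leq 3(a^{2}+b^{2}+c^{2})$,
\[
  \sigma_{t}^{2} = (\beta_{0} + \beta_{1}R_{1,t} + \beta_{2}\sqrt{R_{2,t}})^{2} \leq 3\bigl(\beta_{0}^{2} + \beta_{1}^{2}R_{1,t}^{2} + \beta_{2}^{2}R_{2,t}\bigr),
\]
using $R_{2,t}\geq 0$ from~\eqref{r2positive}. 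This is exactly why it is natural to track the energy $R_{1,t}^{2}+R_{2,t}$ rather than $\|R_{t}\|^{2}=R_{1,t}^{2}+R_{2,t}^{2}$: the drift of $R_{2,t}$ is $\lambda_{2}(\sigma_{t}^{2}-R_{2,t})$, so controlling $R_{2,t}$ needs only the linear bound above, whereas controlling $R_{2,t}^{2}$ would force estimating $\EE(R_{2,t}\sigma_{t}^{2})\sim\EE(R_{1,t}^{2}R_{2,t})$, a genuinely quadratic term that does not close under Gr\"onwall.

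Next I set $u_{n}(t):=\EE(R_{1,t\wedge S_{n}}^{2})$ and $v_{n}(t):=\EE(R_{2,t\wedge S_{n}})$. It\^o's formula gives $d(R_{1,t}^{2}) = 2\lambda_{1}R_{1,t}\sigma_{t}\,dW_{t} + (\lambda_{1}^{2}\sigma_{t}^{2}-2\lambda_{1}R_{1,t}^{2})\,dt$; on the stochastic interval $[0,S_{n}]$ the paths of $R_{1},R_{2}$, and hence of $\sigma$, are bounded, so the stopped stochastic integral is a true martingale with zero expectation. Together with the drift-only dynamics of $R_{2}$, this yields
\[
  u_{n}(t) = R_{1,0}^{2} + \EE\!\int_{0}^{t\wedge S_{n}}\!\bigl(\lambda_{1}^{2}\sigma_{s}^{2}-2\lambda_{1}R_{1,s}^{2}\bigr)ds, \qquad v_{n}(t) = R_{2,0} + \EE\!\int_{0}^{t\wedge S_{n}}\!\lambda_{2}\bigl(\sigma_{s}^{2}-R_{2,s}\bigr)ds.
\]
Inserting the linear growth bound for $\sigma_{s}^{2}$, discarding the nonpositive terms $-2\lambda_{1}R_{1,s}^{2}$ and $-\lambda_{2}R_{2,s}$, and writing $\phi_{n}:=u_{n}+v_{n}$, one obtains $\phi_{n}(t)\leq \phi_{n}(0) + C_{1}t + C_{2}\int_{0}^{t}\phi_{n}(s)\,ds$ for constants $C_{1},C_{2}$ depending only on the parameters (the passage from $\EE\int_{0}^{t\wedge S_{n}}(R_{1,s}^{2}+R_{2,s})\,ds$ to $\int_{0}^{t}\phi_{n}(s)\,ds$ uses nonnegativity of the integrand and Fubini). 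Gr\"onwall's inequality then gives $\phi_{n}(t)\leq C(T)$ for all $t\leq T$ and all $n$, with $C(T)$ independent of $n$.

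Finally, I convert this uniform bound into non-explosion. On $\{S_{n}\leq T\}$ we have $\|R_{S_{n}}\|=n$, i.e.\ $R_{1,S_{n}}^{2}+R_{2,S_{n}}^{2}=n^{2}$; since $R_{2,S_{n}}\geq 0$, either $R_{1,S_{n}}^{2}\geq n^{2}/2$ or $R_{2,S_{n}}\geq n/\sqrt{2}$, so in all cases $R_{1,S_{n}}^{2}+R_{2,S_{n}}\geq n/\sqrt{2}$ once $n\geq\sqrt{2}$. Hence
\[
  \tfrac{n}{\sqrt{2}}\,\PP(S_{n}\leq T) \leq \EE\bigl(R_{1,T\wedge S_{n}}^{2}+R_{2,T\wedge S_{n}}\bigr) = \phi_{n}(T) \leq C(T),
\]
so $\PP(S_{n}\leq T)\to 0$; since $\{S_{n}\leq T\}\downarrow\{\tau\leq T\}$ we get $\PP(\tau\leq T)=0$, and as $T$ was arbitrary, $\tau=\infty$ almost surely. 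The moment bound then follows: for fixed $t\leq T$ we have $R_{1,t\wedge S_{n}}^{2}+R_{2,t\wedge S_{n}}\to R_{1,t}^{2}+R_{2,t}$ a.s., so Fatou's lemma gives $\EE(R_{1,t}^{2}+R_{2,t})\leq\liminf_{n}\phi_{n}(t)\leq C(T)$ uniformly in $t\leq T$. The main obstacle is the four-way coupling feeding $\sigma$; the proof hinges on the two observations that $\sigma^{2}$ is linear in $(R_{1}^{2},R_{2})$ and that the \emph{asymmetric} energy $R_{1}^{2}+R_{2}$ (not $\|R\|^{2}$) is the quantity for which the Gr\"onwall loop closes.
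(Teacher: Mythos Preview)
Your proof is correct and follows essentially the same approach as the paper: both track the asymmetric energy $R_{1}^{2}+R_{2}$, use that $\sigma^{2}$ is linear in $(R_{1}^{2},R_{2})$, close the estimate via Gr\"onwall after stopping, and convert to non-explosion by a Markov-type bound. The only cosmetic differences are your choice of localizing stops $S_{n}=\inf\{t:\|R_{t}\|\geq n\}$ versus the paper's componentwise stops, and your explicit use of Fatou's lemma for the final moment bound, which the paper leaves implicit.
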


\begin{proof}
  Fix $M > 0$ and define the stopping times
	\begin{gather*}
	T^{1}_{M}  := \inf \left\{ t \geq 0 \text{ : } R^{2}_{1,t} \geq M^{2} \right\} \qquad
	T^{2}_{M}  := \inf \left\{ t \geq 0 \text{ : } R_{2,t} \geq M^{2} \right\} \\
	T_{M}  := T^{1}_{M} \wedge T^{2}_{M}.
\end{gather*}
Fix also $t \geq 0$, and note that
\begin{align*}
	M^{2} \PP(T_{M} \leq t) & = \EE(M^{2} \1_{T_{M} \leq t}) 
	 \leq \EE( \max(R^{2}_{1,T_{M} \wedge t},R_{2,T_{M}\wedge t} )) 
	 \leq \EE(R^{2}_{1,T_{M} \wedge t} + R_{2,T_{M} \wedge t}).
\end{align*}
In the main part of the proof below, we show that
\begin{equation}\label{eq:2factornonexplosionProof}
  \EE(R^{2}_{1,T_{M} \wedge t} + R_{2,T_{M} \wedge t})\leq c(t)
\end{equation}
with $c(t)<\infty$ independent of~$M$. It will then follow that $\lim_{M\to\infty}\PP(T_{M} \leq t)=0$, showing that $R_{1}$ and $R_{2}$ have bounded paths on any compact time interval and hence completing the proof.

To show~\eqref{eq:2factornonexplosionProof}, we first apply It\^o's formula to obtain
\begin{align*}
	R^{2}_{1, t \wedge T_{M}} 
	& = R^{2}_{1,0} + \int_{0}^{t \wedge T_{M}}2\lambda_{1}\sigma_{s}R_{1,s}dW_{s} + \int_{0}^{t \wedge T_{M}} (\lambda^{2}_{1} \sigma^{2}_{s} - 2\lambda_{1}R^{2}_{1,s}) ds.
\end{align*}
As $\sigma_{s} R_{1,s}$ is uniformly bounded up to the stopping time $t \wedge T_{M}$, it follows that
\begin{align*}
	\EE(R^{2}_{1,T_{M} \wedge t}) = R_{1,0}^{2} + \EE\left( \int_{0}^{t \wedge T_{M}} (\lambda^{2}_{1} \sigma^{2}_{s} - 2\lambda_{1}R^{2}_{1,s}) ds \right)
\end{align*}
and thus, by Fubini's theorem,
\begin{align}\label{eq:2factorItoExpBound}
	\EE(R^{2}_{1,T_{M} \wedge t}) %
	& = R_{1,0}^{2} +  \int_{0}^{t}\EE\left( (\lambda^{2}_{1} \sigma^{2}_{s} - 2\lambda_{1}R^{2}_{1,s}) \1_{s \leq t \wedge T_{M}}\right) ds.
\end{align}
Next, we insert the definition of $\sigma^{2}_{s}$ to get
\begin{align*}
	 \EE(R^{2}_{1,T_{M} \wedge t})  
	= R_{1,0}^{2}
	+ \int_{0}^{t}\EE\, \Big( &\Big\{\lambda^{2}_{1}(\beta_{0} + \beta_{2}\sqrt{R_{2,s}})^{2} \\
	&  + 2\lambda^{2}_{1}\beta_{1}R_{1,s}(\beta_{0} + \beta_{2}\sqrt{R_{2,s}}) + (\lambda^{2}_{1}\beta_{1}^{2} - 2\lambda_{1})R^{2}_{1,s}\Big\} \1_{s \leq t \wedge T_{M}}\Big)\, ds.
\end{align*} 
Using the elementary inequalities 
\begin{align*}
    2ab \leq a^{2} + b^{2} \qquad \mbox{and} \qquad 
    (a + b)^{2} \leq 2a^{2} + 2b^{2}
\end{align*}
we deduce
\begin{align*}
    \EE(R^{2}_{1,T_{M} \wedge t}) \leq R^{2}_{1,0} + \int_{0}^{t}\EE \Big( & \Big\{ 2\lambda_{1}^{2}\beta^{2}_{0} + 2\lambda^{2}_{1}\beta_{2}^{2}R_{2,s} + \lambda_{1}^{2}\beta^{2}_{0} + \lambda_{1}^{2}\beta^{2}_{1}R_{1,s}^{2}  \\
    & \;\;+ \lambda_{1}^{2}\beta_{1}^{2}R_{1,s}^{2} + \lambda_{1}^{2}\beta_{2}^{2}R_{2,s} + (\lambda^{2}_{1}\beta_{1}^{2} - 2\lambda_{1})R^{2}_{1,s} \Big\} \1_{s \leq t \wedge T_{M}}\Big) ds 
\end{align*}
and therefore
\begin{align}
    \EE(R^{2}_{1,T_{M} \wedge t}) & \leq R^{2}_{1,0} + \int_{0}^{t}\EE\left( \left(3\lambda_{1}^{2}\beta^{2}_{0} + 3\lambda_{1}^{2}\beta_{2}^{2}R_{2,s} + (3\lambda_{1}^{2}\beta_{1}^{2} - 2\lambda_{1})R_{1,s}^{2} \right) \1_{s \leq t \wedge T_{M}}\right) ds \nonumber \\
    & \leq R_{1,0}^{2} + 3\lambda_{1}^{2}\beta^{2}_{0}t + \max\left\{3\lambda_{1}^{2}\beta_{2}^{2}, (3\lambda_{1}^{2}\beta_{1}^{2} - 2\lambda_{1}) \right\} \int_{0}^{t}\EE\left( R_{1,s \wedge T_{M}}^{2} + R_{2,s \wedge T_{M}}\right) ds \nonumber \\
     & =: c_{1,1} + c_{1,2}t  + c_{1,3}\int_{0}^{t}\EE\left(  R_{1,s \wedge T_{M}}^{2} + R_{2,s \wedge T_{M}}\right) ds \label{2factorr1eq}
\end{align}
where the constants $c_{1,1},c_{1,2},c_{1,3}>0$ are independent of $M$ and $t$.

Our next goal is a similar bound for $R_{2}$ instead of $R^{2}_{1}$. From the SDE for $R_{2}$,
\begin{align*}
	&\EE(R_{2,T_{M} \wedge t}) \\
	& = R_{2,0} +\lambda_{2}\EE\left( \int_{0}^{t \wedge T_{M}} \left(\sigma^{2}_{s} - R_{2,s} \right)ds  \right) \\
	& = R_{2,0} +\lambda_{2}\int_{0}^{t} \EE\left( (\sigma^{2}_{s} - R_{2,s}) \1_{s \leq t \wedge T_{M}}\right) ds  \\
	& = R_{2,0} + \lambda_{2} \int_{0}^{t}\EE\left( \left\{\left(\beta_{0} + \beta_{1}R_{1,s}\right)^{2} + 2\left(\beta_{0} + \beta_{1}R_{1,s}\right)\beta_{2}\sqrt{R_{2,s}} + (\beta^{2}_{2} - 1)R_{2,s}\right\} \1_{s \leq t \wedge T_{M}}\right)ds.
\end{align*}
Similarly as above, we obtain
\begin{align*}
    \EE(R_{2,T_{M} \wedge t}) \leq R_{2,0} + \lambda_{2} \int_{0}^{t}\EE \Big( & \Big\{ 2\beta^{2}_{0} + 2\beta_{1}^{2}R^{2}_{1,s} + \beta^{2}_{0} + \beta_{2}^{2}R_{2,s} \\
    & \;\;+ \beta_{1}^{2}R_{1,s}^{2} + \beta_{2}^{2}R_{2,s} + (\beta_{2}^{2} - 1)R_{2,s} \Big\} \1_{s \leq t \wedge T_{M}}\Big) ds .
\end{align*}
We conclude that
\begin{align}
	\EE(R_{2,T_{M} \wedge t}) & \leq R_{2,0} + \lambda_{2} \int_{0}^{t}\EE\left( \left( 3\beta^{2}_{0} + 3\beta^{2}_{1}R^{2}_{1,s} + (3\beta_{2}^{2} - 1)R_{2,s} \right) \1_{s \leq t \wedge T_{M}}\right) ds \nonumber \\
    & \leq  R_{2,0} +  3\lambda_{2}\beta^{2}_{0}t + 
    \max\{3\beta^{2}_{1}, 3\beta^{2}_{2} - 1 \} \int_{0}^{t}\EE\left( R_{1,s \wedge T_{M}}^{2} +  R_{2,s \wedge T_{M}}\right) ds \nonumber \\
    & =: c_{2,1} + c_{2,2} t  + c_{2,3}\int_{0}^{t}\EE\left( R_{1,s \wedge T_{M}}^{2} +  R_{2,s \wedge T_{M}}\right) ds, \label{2factorr2eq}
\end{align}
where again the constants do not depend on $M$ and $t$.

Writing $c_{i}=c_{1,i}+c_{2,i}$, 
combining~\eqref{2factorr1eq} and~\eqref{2factorr2eq} yields
\begin{align*}
	\EE(	R^{2}_{1,T_{M} \wedge t} + R_{2,T_{M} \wedge t}) \leq c_{1} + c_{2}t + c_{3}\int_{0}^{t}\EE\left( R_{1,s \wedge T_{M}}^{2} +   R_{2,s \wedge T_{M}}\right) ds,
\end{align*}
and now Gr\"onwall's inequality shows
\begin{align*}
	\EE( R^{2}_{1,T_{M} \wedge t} + R_{2,T_{M} \wedge t}) \leq \left(c_{1} + c_{2}t\right)e^{c_{3}t}.
\end{align*}
This establishes~\eqref{eq:2factornonexplosionProof} and hence completes the proof.
\end{proof}

\begin{remark}\label{2factorgen}
    The results in this section generalize to volatility models having the same dynamics as~\eqref{2factor} for $(R_{1,t},R_{2,t})$ but a more general functional $\sigma_{t} = f(R_{1,t},R_{2,t})$ where
  \begin{enumerate}
  \item $f : \RR \times \RR_{+} \to \RR$ is continuous, and Lipschitz on compact subsets of $\RR \times (0,\infty)$,
  \item there are $K_{1}, K_{2} \in \RR$ such that
  \begin{align}
            f(x,y)^{2} \leq K_{1}(x^{2} + y) + K_{2}. \label{2factorgenbound}
  \end{align}   
  \end{enumerate} 
Indeed, continuity and local Lipschitz continuity imply analogues of \cref{weakexist} and \cref{pathwiseUnique}. Inequality \eqref{2factorgenbound} implies that bounding~\eqref{eq:2factornonexplosionProof} is enough to bound the expectation of $\sigma^{2}_{t}$. Moreover, it implies that 
        \begin{align*}
            \lambda_{1}^{2}f(x,y)^{2} - 2\lambda_{1}x^{2} \leq K_{1,1}(x^{2} + y) + K_{1,2} %
        \end{align*}
        for some $K_{1,1}, K_{1,2} \in \RR$, so that a bound analogous to~\eqref{2factorr1eq} holds, as well as
        \begin{align*}
            f(x,y)^{2} - y \leq K_{1}(x^{2} + y) + K_{2}, %
        \end{align*}
        so that a bound analogous to~\eqref{2factorr2eq} holds.
\end{remark}

\subsection{Positivity of $\sigma_{t}$}\label{se:2factorpos}

In this section we show that $\sigma_{t}>0$ under certain conditions. More precisely, we exhibit a lower bound $\sigma_{t}\geq Y_{t}>0$ which also shows, e.g., that $1/\sigma_{t}$ has finite moments of all orders. This strengthens the result in \cite[Section~4.1.5]{GuyonLekeufack.22} where it is observed that $\sigma_{t}\geq0$ since the drift of $\sigma_{t}$ would be positive whenever $\sigma_{t}$ reaches~$0$.

\begin{theorem}\label{2sigmapos} 
  Consider the solution $(R_{1,t},R_{2,t})$ of~\eqref{2factor} up to a possible time $\tau$ of explosion. If the initial values $(R_{1,0}, R_{2,0})$ are such that $\sigma_{0} = \beta_{0} + \beta_{1}R_{1,0} + \beta_{2} \sqrt{R_{2, 0}}>0$, and if moreover $\lambda_{2} < 2\lambda_{1}$, then $\sigma_{t} > 0$ for all $t < \tau$. More precisely, we have $\sigma_{t} \geq Y_{t}$, where $Y_{t}$ is the stochastic exponential~\eqref{eq:Y}.
\end{theorem}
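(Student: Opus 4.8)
The plan is to derive an SDE for the scalar process $\sigma_t$ itself, bound its drift from below by the \emph{linear} expression $-\lambda_1\sigma_t$, and then compare $\sigma_t$ with the stochastic exponential having exactly this drift. First I would apply It\^o's formula to $\sigma_t = \beta_0 + \beta_1 R_{1,t} + \beta_2\sqrt{R_{2,t}}$. Since $R_{2,t}$ is of finite variation and strictly positive by~\eqref{r2positive}, the map $y\mapsto\sqrt y$ is $C^1$ along the path and contributes only a drift term, so the only martingale part comes from $dR_{1,t}$. This gives
\[
d\sigma_t = \beta_1\lambda_1\sigma_t\, dW_t + \left(-\beta_1\lambda_1 R_{1,t} + \frac{\beta_2\lambda_2(\sigma_t^2 - R_{2,t})}{2\sqrt{R_{2,t}}}\right)dt.
\]

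The crucial step is to rewrite and bound this drift. Using the identity $\beta_1 R_{1,t} = \sigma_t - \beta_0 - \beta_2\sqrt{R_{2,t}}$ (which only multiplies $R_{1,t}$ and so avoids dividing by $\beta_1$, covering the degenerate case $\beta_1=0$ as well), the drift becomes
\[
-\lambda_1\sigma_t + \lambda_1\beta_0 + \beta_2\left(\lambda_1 - \tfrac{\lambda_2}{2}\right)\sqrt{R_{2,t}} + \frac{\beta_2\lambda_2\sigma_t^2}{2\sqrt{R_{2,t}}}.
\]
Every term besides $-\lambda_1\sigma_t$ is nonnegative: $\lambda_1\beta_0\geq 0$ and $\frac{\beta_2\lambda_2\sigma_t^2}{2\sqrt{R_{2,t}}}\geq 0$ by the sign constraints on the parameters together with $R_{2,t}>0$, while the middle term is nonnegative precisely because $\beta_2\geq 0$ and the hypothesis $\lambda_2<2\lambda_1$ forces $\lambda_1-\lambda_2/2>0$. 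Hence the drift of $\sigma_t$ is bounded below by $-\lambda_1\sigma_t$. This is the heart of the argument and the one place where the standing assumption $\lambda_2<2\lambda_1$ is used.

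Finally I would introduce $Y_t$ as the stochastic exponential~\eqref{eq:Y} solving $dY_t = \beta_1\lambda_1 Y_t\, dW_t - \lambda_1 Y_t\, dt$ with $Y_0=\sigma_0>0$; explicitly $Y_t = \sigma_0\exp\!\big(\beta_1\lambda_1 W_t - \lambda_1 t - \tfrac12\beta_1^2\lambda_1^2 t\big)>0$. To conclude $\sigma_t\geq Y_t$, consider the ratio $\Phi_t := \sigma_t/Y_t$, well defined since $Y_t>0$. Applying It\^o's product/quotient rule, the $dW_t$-terms coming from $\sigma_t$ and from $1/Y_t$ cancel exactly (both diffusion coefficients are proportional to $\beta_1\lambda_1$), so $\Phi_t$ has no martingale part, and its drift equals $\big(\text{drift of }\sigma_t + \lambda_1\sigma_t\big)/Y_t\geq 0$ by the lower bound just established. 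Thus $\Phi_t$ is nondecreasing, whence $\Phi_t\geq\Phi_0=1$ and $\sigma_t\geq Y_t>0$ for all $t<\tau$, as claimed.

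The main obstacle is the second step — computing $d\sigma_t$ and rewriting its drift into the manifestly lower-bounded form above; once that is in hand, the comparison via $\Phi_t$ is routine. The only additional care needed is to localize the It\^o arguments up to the explosion time $\tau$ (and along the sequence $R_{2,t}\geq\varepsilon$) so that all stochastic integrals are well defined, which causes no difficulty since the pathwise inequalities pass to the limit.
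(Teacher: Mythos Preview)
Your proposal is correct and its first two steps---deriving the SDE for $\sigma_t$ and rewriting the drift via $\beta_1 R_{1,t}=\sigma_t-\beta_0-\beta_2\sqrt{R_{2,t}}$ to obtain the lower bound $-\lambda_1\sigma_t$---are exactly the paper's argument. The only difference is in the final comparison step. The paper invokes the general SDE comparison theorem of Karatzas--Shreve, checking that the two SDEs share the diffusion coefficient $v(x)=\beta_1\lambda_1 x$, that the $Y$-drift is Lipschitz, and that $\int_0^\eps v(x)^{-2}\,dx=\infty$. You instead form the ratio $\Phi_t=\sigma_t/Y_t$ directly: because both processes have the \emph{same linear} diffusion, the martingale parts of $d\Phi_t$ cancel and $\Phi$ is a finite-variation process with nonnegative derivative, hence nondecreasing. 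Your route is more elementary and fully self-contained, but it relies on the special coincidence that both diffusion coefficients are $x\mapsto\beta_1\lambda_1 x$; the paper's black-box comparison would still apply in variants where only a H\"older-type matching of diffusions is available.
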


\begin{proof}
By It\^o's formula, $\sigma_{t}$ satisfies the SDE
\begin{align}
	d\sigma_{t} = \left(-\beta_{1}\lambda_{1}R_{1,t} + \frac{\lambda_{2}\beta_{2}}{2} \frac{\sigma^{2}_{t} - R_{2,t}}{\sqrt{R_{2,t}}} \right) dt + \beta_{1}\lambda_{1}\sigma_{t}dW_{t} \,.\label{2factorsigmaeq}
\end{align}
Using the assumption that $\lambda_{2} < 2\lambda_{1}$, we can bound the drift of $\sigma_{t}$ from below:
\begin{align}
	-\beta_{1}\lambda_{1}R_{1,t} + \frac{\lambda_{2}\beta_{2}}{2} \frac{\sigma^{2}_{t} - R_{2,t}}{\sqrt{R_{2,t}}} & = -\lambda_{1}(\sigma_{t} - \beta_{0} - \beta_{2} \sqrt{R_{2,t}}) - \frac{\lambda_{2}\beta_{2}}{2} \sqrt{R_{2, t}} + \frac{\lambda_{2}\beta_{2}}{2}\frac{\sigma^{2}_{t}}{\sqrt{R_{2,t}}} \nonumber \\
	& \geq -\lambda_{1}\sigma_{t} + \beta_{0}\lambda_{1} + \beta_{2} \left(\lambda_{1} - \frac{\lambda_{2}}{2} \right) \sqrt{R_{2, t}} + \frac{\lambda_{2}\beta_{2}}{2}\frac{\sigma^{2}_{t}}{\sqrt{R_{2,t}}} \nonumber \\
	& \geq -\lambda_{1}\sigma_{t} \label{driftineq}
\end{align}
because all the other terms are nonnegative. Inspired by~\eqref{driftineq}, we define a process $Y$ via
\begin{align}
	dY_{t} = -\lambda_{1}Y_{t}dt + \beta_{1}\lambda_{1}Y_{t}dW_{t} \text{ , } \quad Y_{0} = \sigma_{0}. \label{compsigma}
\end{align}
Note that $Y$ is simply the stochastic exponential
\begin{align}\label{eq:Y}
	Y_{t} = \sigma_{0}\exp \left( \beta_{1}\lambda_{1}W_{t} - \lambda_{1}t - \frac{1}{2} \beta^{2}_{1}\lambda^{2}_{1} t \right)
\end{align}
and, in particular, $Y_{t} > 0$ for all $t \geq 0$.

The SDEs \eqref{2factorsigmaeq} and \eqref{compsigma} have the same initial condition $\sigma_{0}$, the same volatility function $v(x) = \beta_{1}\lambda_{1}x$ and their drift functions are ordered according to \eqref{driftineq}. Moreover, both drift and volatility functions are continuous on the relevant domains, the drift function of $Y$ is Lipschitz, and
\begin{align*}
	\int_{0}^{\eps} v(x)^{-2} dx = \infty
\end{align*}
for every $\eps > 0$. In view of these conditions, the comparison result for SDEs\footnote{To be precise, the cited theorem is stated for SDEs where the drift and volatility functions depend only on time and the solution process. Here, they are random as they depend on $(R_{1,t},R_{2,t})$. The proof holds without change.} \cite[Theorem~5.2.18]{KaratzasShreve.91}  yields that  $\sigma_{t} \geq Y_{t}$ for all $t<\tau$. 
\end{proof}

\section{Analysis of the 4-Factor Model~\eqref{4factor}}\label{se:4factor}

In \cref{se:4factorWellposed} we prove our main result on the wellposedness of the 4-factor model and the absence of explosions. In \cref{se:4factorpos} we show that $\sigma_t$ need not remain positive under the stated conditions.

\subsection{Wellposedness and Absence of Explosions}\label{se:4factorWellposed}

The general wellposedness of~\eqref{4factor} is shown using the same arguments as in the 2-factor model; we therefore omit the proof.

\begin{proposition} \label{pr:wellposed2} The SDE \eqref{4factor} satisfies strong existence and uniqueness up to a possible explosion time.
\end{proposition}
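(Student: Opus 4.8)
The plan is to follow the three-step strategy used for the 2-factor model verbatim, adapting only the bookkeeping to the four-dimensional state. First I would introduce the concise notation $R_{t} := (R_{1,0,t}, R_{1,1,t}, R_{2,0,t}, R_{2,1,t}) \in \RR^{4}$ and rewrite \eqref{4factor} as $dR_{t} = b(R_{t})\,dt + \nu(R_{t})\,dW_{t}$, where (with the convention $\sqrt{y}:=0$ for $y<0$) the diffusion vector is $\nu(R) = (\lambda_{1,0}\sigma,\, \lambda_{1,1}\sigma,\, 0,\, 0)$ and the drift is $b(R) = (-\lambda_{1,0}R_{1,0},\, -\lambda_{1,1}R_{1,1},\, \lambda_{2,0}(\sigma^{2} - R_{2,0}),\, \lambda_{2,1}(\sigma^{2} - R_{2,1}))$, with $\sigma = \beta_{0} + \beta_{1}R_{1} + \beta_{2}\sqrt{R_{2}}$ and $R_{1},R_{2}$ the convex combinations defined in \eqref{4factor}. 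Since $b$ and $\nu$ are continuous on $\RR^{4}$ and the initial condition is deterministic, the existence theorem \cite[Theorem~IV.2.3]{IkedaWatanabe.89} gives a weak solution up to a possible explosion time, exactly as in \cref{weakexist}.

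Second, I would establish pathwise uniqueness by repeating the argument of \cref{pathwiseUnique}. The only place where the coefficients fail the local Lipschitz condition is the square-root term $\sqrt{R_{2,t}}$ with $R_{2,t} = (1-\theta_{2})R_{2,0,t} + \theta_{2}R_{2,1,t}$, and this failure occurs only on the boundary $\{R_{2,t}=0\}$. Because $\sqrt{\cdot}$ composed with the linear map $(R_{2,0},R_{2,1}) \mapsto (1-\theta_{2})R_{2,0} + \theta_{2}R_{2,1}$ has bounded gradient on $\{R_{2}\geq\eps\}$, the coefficients $b,\nu$ are Lipschitz on $\{\|R\|\leq N,\ R_{2}\geq\eps\}$ with some constant $K_{\eps,N}$. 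Localizing two solutions $(R,W)$ and $(R',W)$ with common initial data by $S_{N} := \inf\{t: \|R_{t}\|\geq N\}$ and $T_{\eps} := \inf\{t: R_{2,t}\leq\eps\}$ (and their primed analogues), I would apply It\^o's isometry, H\"older's inequality, and Gr\"onwall's inequality on $[0,T]$ to conclude $\EE\|R_{t\wedge S_{\eps,N}} - R'_{t\wedge S_{\eps,N}}\|^{2} = 0$, just as in the display chain of \cref{pathwiseUnique}.

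The delicate step, and the only point where the argument genuinely uses the structure of the model, is letting $\eps \to 0$. Here I would invoke the analogue of the positivity \eqref{r2positive}: each factor satisfies $R_{2,j,t} = R_{2,j,0}e^{-\lambda_{2,j}t} + \lambda_{2,j}\int_{0}^{t} \sigma_{s}^{2} e^{-\lambda_{2,j}(t-s)}\,ds \geq R_{2,j,0}e^{-\lambda_{2,j}t} > 0$, so the convex combination $R_{2,t}$ is bounded below by a strictly positive deterministic function on every compact interval. Consequently $T_{\eps}, T'_{\eps} \to \infty$ as $\eps\to 0$, giving $R_{t\wedge S_{N}\wedge S'_{N}} = R'_{t\wedge S_{N}\wedge S'_{N}}$, and letting $N\to\infty$ yields pathwise uniqueness up to explosion. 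Finally, weak existence together with pathwise uniqueness gives strong existence and uniqueness by \cite[Theorem~IV.1.1]{IkedaWatanabe.89}. I expect no serious obstacle beyond this positivity argument: the four-dimensional setting only inflates the vector algebra, while the convexity of the weights $\theta_{i}\in[0,1]$ ensures that positivity of the individual $R_{2,j}$ transfers to $R_{2}$, keeping the square-root singularity confined to an unreachable boundary.
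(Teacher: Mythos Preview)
Your proposal is correct and is precisely the approach the paper intends: the paper explicitly omits the proof, stating that the general wellposedness of~\eqref{4factor} is shown using the same arguments as in the 2-factor model (\cref{weakexist}, \cref{pathwiseUnique}, \cref{co:wellposed2}), and you have faithfully reproduced those three steps with the necessary four-dimensional bookkeeping and the positivity observation $R_{2,j,t} \geq R_{2,j,0}e^{-\lambda_{2,j}t} > 0$ replacing~\eqref{r2positive}.
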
 

The next result contains our main contribution.

\begin{lemma} \label{4factornonexplosion} A solution $(R_{1,j,t},R_{2,j,t})_{j\in\{0,1\}}$ of~\eqref{4factor} cannot explode in finite time. Moreover, $\sup_{t\leq T}\EE(R^{2}_{1,0, t} + R^{2}_{1,1, t } + R_{2,0, t } + R_{2,1, t})<\infty$ for any $T\in[0,\infty)$.
 \end{lemma}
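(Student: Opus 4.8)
The plan is to mirror the structure of the proof of \cref{2factornonexplosion}, controlling explosion via a Gr\"onwall-type estimate on the second moments of the four processes. As in the 2-factor case, I would fix $M>0$ and define stopping times
\begin{align*}
  T^{1,j}_{M} := \inf\{t\geq 0: R^{2}_{1,j,t}\geq M^{2}\},\qquad
  T^{2,j}_{M} := \inf\{t\geq 0: R_{2,j,t}\geq M^{2}\}
\end{align*}
for $j\in\{0,1\}$, and set $T_{M}:=\min_{i,j}T^{i,j}_{M}$. The same Markov-inequality argument as before reduces the claim to the uniform (in $M$) bound
\begin{align}
  \EE\Big(\sum_{j}R^{2}_{1,j,T_{M}\wedge t}+\sum_{j}R_{2,j,T_{M}\wedge t}\Big)\leq c(t)<\infty. \label{eq:4goal}
\end{align}

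To obtain \eqref{eq:4goal}, I would apply It\^o's formula to each $R^{2}_{1,j,t}$ and use that $\sigma_{s}R_{1,j,s}$ is bounded up to $t\wedge T_{M}$ to kill the martingale part, exactly as in deriving \eqref{eq:2factorItoExpBound}; this yields for each $j$
\begin{align*}
  \EE(R^{2}_{1,j,T_{M}\wedge t})=R^{2}_{1,j,0}+\int_{0}^{t}\EE\big((\lambda^{2}_{1,j}\sigma^{2}_{s}-2\lambda_{1,j}R^{2}_{1,j,s})\1_{s\leq t\wedge T_{M}}\big)\,ds.
\end{align*}
For the $R_{2,j}$ processes there is no martingale part, and integrating the drift gives
\begin{align*}
  \EE(R_{2,j,T_{M}\wedge t})=R_{2,j,0}+\lambda_{2,j}\int_{0}^{t}\EE\big((\sigma^{2}_{s}-R_{2,j,s})\1_{s\leq t\wedge T_{M}}\big)\,ds.
\end{align*}
The crux is then to bound $\sigma^{2}_{s}$ by a constant plus a linear combination of the four quantities $R^{2}_{1,0,s},R^{2}_{1,1,s},R_{2,0,s},R_{2,1,s}$ whose expectations I am trying to control. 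Since $R_{1,s}=(1-\theta_{1})R_{1,0,s}+\theta_{1}R_{1,1,s}$ and $R_{2,s}=(1-\theta_{2})R_{2,0,s}+\theta_{2}R_{2,1,s}$ are convex combinations, I would use $R^{2}_{1,s}\leq (1-\theta_{1})R^{2}_{1,0,s}+\theta_{1}R^{2}_{1,1,s}$ (Jensen) and $\sqrt{R_{2,s}}\leq (1-\theta_{2})\sqrt{R_{2,0,s}}+\theta_{2}\sqrt{R_{2,1,s}}$, together with the elementary inequalities $2ab\leq a^{2}+b^{2}$ and $(a+b+c)^{2}\leq 3(a^{2}+b^{2}+c^{2})$, to expand $\sigma^{2}_{s}=(\beta_{0}+\beta_{1}R_{1,s}+\beta_{2}\sqrt{R_{2,s}})^{2}$ into a constant plus a nonnegative linear combination of the four state quantities. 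The same expansion also controls $\lambda^{2}_{1,j}\sigma^{2}_{s}-2\lambda_{1,j}R^{2}_{1,j,s}$ and $\lambda_{2,j}(\sigma^{2}_{s}-R_{2,j,s})$ from above by affine functions of the four quantities (the $-2\lambda_{1,j}R^{2}_{1,j,s}$ and $-\lambda_{2,j}R_{2,j,s}$ terms only help, or are harmlessly absorbed into a constant coefficient).

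Summing the four identities and inserting these bounds gives
\begin{align*}
  \EE\Big(\sum_{j}R^{2}_{1,j,T_{M}\wedge t}+\sum_{j}R_{2,j,T_{M}\wedge t}\Big)
  \leq c_{1}+c_{2}t+c_{3}\int_{0}^{t}\EE\Big(\sum_{j}R^{2}_{1,j,s\wedge T_{M}}+\sum_{j}R_{2,j,s\wedge T_{M}}\Big)\,ds
\end{align*}
with constants $c_{1},c_{2},c_{3}$ independent of $M$ and $t$, and Gr\"onwall's inequality then yields \eqref{eq:4goal} with $c(t)=(c_{1}+c_{2}t)e^{c_{3}t}$; letting $M\to\infty$ excludes explosion, and the stated supremum bound follows from Fatou's lemma. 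I expect the main obstacle to be purely bookkeeping rather than conceptual: the cross terms $2\beta_{1}\beta_{2}R_{1,s}\sqrt{R_{2,s}}$ produce mixed products of the $R_{1,j}$ and $\sqrt{R_{2,j}}$ factors across the two timescales, and one must verify that after applying $2ab\leq a^{2}+b^{2}$ every resulting term is dominated by one of the four tracked quantities (or a constant), so that the coefficients $c_{i}$ genuinely do not depend on $M$. Because the destabilizing growth enters only through $\sigma^{2}_{s}$, which is linearly controlled by the four quantities via \eqref{2factorgenbound}-type reasoning, no new phenomenon appears relative to the 2-factor case; the convexity/Jensen step is what keeps the convex-combination structure from introducing additional difficulty.
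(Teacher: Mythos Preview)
Your approach is essentially the same as the paper's: the same localization, the same It\^o computation for $R_{1,j}^{2}$, the same integration for $R_{2,j}$, the same reduction to a linear bound on $\sigma_{s}^{2}$ in terms of the four tracked quantities, and the same Gr\"onwall closure. One slip: the inequality $\sqrt{R_{2,s}}\leq (1-\theta_{2})\sqrt{R_{2,0,s}}+\theta_{2}\sqrt{R_{2,1,s}}$ is false---concavity of the square root gives the reverse direction. This does not damage the argument, since what you actually need (and what the paper uses via~\eqref{r2convexineq}) is the simpler bound $R_{2,s}\leq R_{2,0,s}+R_{2,1,s}$; combined with $2ab\leq a^{2}+b^{2}$, every occurrence of $\sqrt{R_{2,s}}$ in the expansion of $\sigma_{s}^{2}$ is absorbed into constants plus $R_{2,0,s}+R_{2,1,s}$ without ever splitting the square root across timescales.
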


\begin{proof}
  We follow the guidance provided by the 2-factor model. Fix $M>0$ and define
\begin{align*}
	T^{1,0}_{M} := \inf \left\{ t \geq 0 \text{ : } R^{2}_{1,0,t} \right. & \left. \geq M^{2} \right\} \qquad T^{1,1}_{M} := \inf \left\{ t \geq 0 \text{ : } R^{2}_{1,1,t} \geq M^{2} \right\} \\
	T^{2,0}_{M} := \inf \left\{ t \geq 0 \text{ : } R_{2,0,t}  \right. & \left. \geq M^{2} \right\} \qquad T^{2,1}_{M} := \inf \left\{ t \geq 0 \text{ : } R_{2,1,t} \geq M^{2} \right\} \\
	T_{M} & := T^{1,0}_{M} \wedge  T^{1,1}_{M} \wedge  T^{2,0}_{M} \wedge  T^{2,1}_{M}.
\end{align*}
Fix also $t>0$, then
\begin{align*}
	M^{2}\PP(T_{M} \leq t) & = \EE(M^{2}\1_{T_{M} \leq t}) \\
	& \leq \EE\left( \max\left(R^{2}_{1,0,t \wedge T_{M}}, R^{2}_{1,1,t \wedge T_{M}}, R_{2,0,t \wedge T_{M}}, R_{2,1,t \wedge T_{M}} \right) \right) \\
	& \leq \EE\left( R^{2}_{1,0,t \wedge T_{M}} + R^{2}_{1,1,t \wedge T_{M}} + R_{2,0,t \wedge T_{M}} + R_{2,1,t \wedge T_{M}} \right) = \EE(U_{t \wedge T_{M}})
\end{align*}
where
\begin{align*}
	U_{t} := R^{2}_{1,0, t} + R^{2}_{1,1, t } + R_{2,0, t } + R_{2,1, t} \,.
\end{align*}
We shall prove that
\begin{equation}\label{eq:4factornonexplosionProof}
  \EE(U_{t \wedge T_{M}})\leq C(t)
\end{equation}
for a continuous $C(t)$ independent of~$M$, and that will imply the claim.

To prove~\eqref{eq:4factornonexplosionProof}, we note as in~\eqref{eq:2factorItoExpBound} that 
\begin{align}
	\EE\left( R^{2}_{1,j, t \wedge T_{M}} \right) 
	& = R^{2}_{1,j,0} + \int_{0}^{t}  \EE \left(\{\lambda^{2}_{1,j} \sigma^{2}_{s} - 2\lambda_{1,j}R^{2}_{1,j,s}\} \1_{s \leq t \wedge T_{M}} \right) ds. \label{4factorr1eq} 
\end{align}
We first focus on $j=0$. Inserting the definition of $\sigma^{2}_{s}$ and then the one of $R^{2}_{1,s}$, we have 
\begin{align*}
	\lambda^{2}_{1,0} \sigma^{2}_{s} & -  2\lambda_{1,0}R^{2}_{1,0,s} \\
    &= \lambda^{2}_{1,0}(\beta_{0} + \beta_{2}\sqrt{R_{2,s}})^{2} + 2\lambda^{2}_{1,0}\beta_{1}(\beta_{0} + \beta_{2}\sqrt{R_{2,s}})R_{1,s} + \lambda^{2}_{1,0}\beta^{2}_{1}R^{2}_{1,s}  - 2\lambda_{1,0}R^{2}_{1,0,s} \\ 
    &\leq  3\lambda_{1,0}^{2}\beta_{0}^{2} + 3\lambda_{1,0}^{2}\beta_{2}^{2}R_{2,s} + 3\lambda_{1,0}^{2}\beta_{1}^{2}R^{2}_{1,s} - 2\lambda_{1,0}R^{2}_{1,0,s} \\
    & \leq 3\lambda_{1,0}^{2}\beta_{0}^{2} + 3\lambda_{1,0}^{2}\beta_{2}^{2}(R_{2,0,s} + R_{2,1,s}) + 3\lambda_{1,0}^{2}\beta_{1}^{2}(R^{2}_{1,0,s} + R^{2}_{1,1,s}) - 2\lambda_{1,0}R^{2}_{1,0,s} \\
    & = 3\lambda_{1,0}^{2}\beta_{0}^{2} + 3\lambda_{1,0}^{2}\beta_{2}^{2}R_{2,0,s} + 3\lambda_{1,0}^{2}\beta_{2}^{2}R_{2,1,s}+ 3\lambda_{1,0}^{2}\beta_{1}^{2}R^{2}_{1,1,s} + (3\lambda_{1,0}^{2}\beta_{1}^{2} - 2\lambda_{1,0})R^{2}_{1,0,s}
\end{align*}
where we used the elementary convexity inequalities
\begin{align}
	R^{2}_{1, s} & =((1 - \theta_{1})R_ {1,0,s} + \theta_{1}R_{1,1,s})^{2} \leq (1 - \theta_{1})R^{2}_ {1,0,s} + \theta_{1}R^{2}_{1,1,s} \leq R^{2}_ {1,0,s} + R^{2}_{1,1,s} \,,\label{r1convexineq} \\
 R_{2,s} & = (1 - \theta_{2})R_ {2,0,s} + \theta_{2}R_{2,1,s} \leq R_ {2,0,s} + R_{2,1,s} \,. \label{r2convexineq}
\end{align}
Using this inequality in~\eqref{4factorr1eq} yields
\begin{align*}
	\EE\left( R^{2}_{1,0, t \wedge T_{M}} \right) 
	& \leq R^{2}_{1,0,0} + \int_{0}^{t}  \EE  \Big( 3\lambda_{1,0}^{2}\beta_{0}^{2} + 3\lambda_{1,0}^{2}\beta_{2}^{2}R_{2,0,s \wedge T_{M}}\\
 & \quad\, + 3\lambda_{1,0}^{2}\beta_{2}^{2}R_{2,1,s\wedge T_{M}}+ 3\lambda_{1,0}^{2}\beta_{1}^{2}R^{2}_{1,1,s\wedge T_{M}} + (3\lambda_{1,0}^{2}\beta_{1}^{2} - 2\lambda_{1,0})R^{2}_{1,0,s\wedge T_{M}} \Big) ds \\
	& =: c_{1,0,0}(t) + c_{1,0,1}\int_{0}^{t} \EE(R_{2,0, s \wedge T_{M}})ds + c_{1,0,1}\int_{0}^{t} \EE(R_{2,1, s \wedge T_{M}})ds \\
    & \quad\, + c_{1,0,1} \int_{0}^{t} \EE(R^{2}_{1,1, s \wedge T_{M}})ds + c_{1,0,2} \int_{0}^{t} \EE(R^{2}_{1,0, s \wedge T_{M}})ds
\end{align*} 
where $c_{1,0,0}(t)$ is affine in~$t$.
Symmetrically, we have for $j = 1$ that
\begin{align*}
	\lambda^{2}_{1,1} \sigma^{2}_{s} - 2\lambda_{1,1}R^{2}_{1,1,s} 
	& \leq 3\lambda_{1,1}^{2}\beta_{0}^{2} + 3\lambda_{1,1}^{2}\beta_{2}^{2}R_{2,0,s} \\
    & \quad\, + 3\lambda_{1,1}^{2}\beta_{2}^{2}R_{2,1,s}+ 3\lambda_{1,1}^{2}\beta_{1}^{2}R^{2}_{1,0,s} + (3\lambda_{1,1}^{2}\beta_{1}^{2} - 2\lambda_{1,1})R^{2}_{1,1,s}
\end{align*}
and then
\begin{align*}
	\EE\left( R^{2}_{1,1, t \wedge T_{M}} \right) & \leq R^{2}_{1,1,0} + \int_{0}^{t}  \EE  \Big( 3\lambda_{1,1}^{2}\beta_{0}^{2} + 3\lambda_{1,1}^{2}\beta_{2}^{2}R_{2,0,s \wedge T_{M}}  \\
 & \quad\, + 3\lambda_{1,1}^{2}\beta_{2}^{2}R_{2,1,s\wedge T_{M}}+ 3\lambda_{1,1}^{2}\beta_{1}^{2}R^{2}_{1,0,s\wedge T_{M}} + (3\lambda_{1,1}^{2}\beta_{1}^{2} - 2\lambda_{1,1})R^{2}_{1,1,s\wedge T_{M}} \Big) ds \\
	& =: c_{1,1,0}(t) + c_{1,1,1}\int_{0}^{t} \EE(R_{2,0,s \wedge T_{M}})ds + c_{1,1,1}\int_{0}^{t} \EE(R_{2,1,s \wedge T_{M}})ds \\
    & \quad\, + c_{1,1,1} \int_{0}^{t} \EE(R^{2}_{1,0, s \wedge T_{M}})ds + c_{1,1,2} \int_{0}^{t} \EE(R^{2}_{1,1, s \wedge T_{M}})ds.
\end{align*}

Combining the results for $j=0$ and $j=1$ yields
\begin{align}
	\EE\left(  R^{2}_{1,0, t \wedge T_{M}} \right. & +\left. R^{2}_{1,1, t \wedge T_{M}} \right) \nonumber \\
	&\leq c_{1,0}(t) + c_{1,1}\int_{0}^{t} \EE(R_{2,0,s \wedge T_{M}} + R_{2,1,s \wedge T_{M}})ds  \nonumber \\
 & \quad\, + c_{1,2} \int_{0}^{t} \EE\left(  R^{2}_{1,0, s \wedge T_{M}} + R^{2}_{1,1, s \wedge T_{M}} \right)ds \label{4factorr1sumeq}
\end{align}
where the constants have the obvious definitions.

Next, we derive a similar bound for~$R_{2,j}$ instead of $R^{2}_{1,j}$. From the SDE for~$R_{2,j}$,
\begin{align}
	\EE\left( R_{2,j, t \wedge T_{M}} \right) 
	& = R_{2,j,0} + \lambda_{2,j} \int_{0}^{t}  \EE \left(( \sigma^{2}_{s} - R_{2,j,s}) \1_{s \leq t \wedge T_{M}} \right) ds. \label{4factorr2eq} 
\end{align}
Focusing again on $j = 0$ first, we insert the definitions of $\sigma^{2}_{s}$ and $R_{2,s}$ and estimate
\begin{align*}
	\sigma^{2}_{s} & - R_{2,0,s} \\
	& = (\beta_{0} + \beta_{1}R_{1,s})^{2} + 2\beta_{2}(\beta_{0} + \beta_{1}R_{1,s})\sqrt{R_{2,s}} + \beta^{2}_{2}\theta_{2}R_{2,1,s} + (\beta^{2}_{2}(1-\theta_{2}) - 1)R_{2,0,s} \\
	& \leq 2\beta_{0}^{2} + 2\beta_{1}^{2}R_{1,s}^{2} + \beta_{0}^{2} + \beta_{2}^{2}R_{2,s} + \beta_{1}^{2}R_{1,s}^{2} + \beta_{2}^{2}R_{2,s} + \beta^{2}_{2}\theta_{2}R_{2,1,s} + (\beta^{2}_{2}(1-\theta_{2}) - 1)R_{2,0,s} \\
	& \leq 3\beta_{0}^{2} + 3\beta_{1}^{2}R_{1,s}^{2} + 3\beta_{2}^{2}R_{2,1,s} + (3\beta_{2}^{2} - 1)R_{2,0,s}\\
        & \leq 3\beta_{0}^{2} + 3\beta_{1}^{2}R_{1,0,s}^{2} + 3\beta_{1}^{2}R_{1,1,s}^{2} + 3\beta_{2}^{2}R_{2,1,s} + (3\beta_{2}^{2} - 1)R_{2,0,s}\,.
\end{align*}
Applying this in \eqref{4factorr2eq}, we conclude that
\begin{align*}
	\EE\left( R_{2,0, t \wedge T_{M}} \right) & \leq R_{2,0,0} + \lambda_{2,0} \int_{0}^{t}  \EE \Big( 3\beta_{0}^{2} + 3\beta_{1}^{2}R_{1,0,s \wedge T_{M}}^{2} \\
 & \quad\,+ 3\beta_{1}^{2}R_{1,1,s \wedge T_{M}}^{2} + 3\beta_{2}^{2}R_{2,1,s \wedge T_{M}} + (3\beta_{2}^{2} - 1)R_{2,0,s \wedge T_{M}}\Big) ds \\
	& =: c_{2,0,0}(t) + c_{2,0,1}\int_{0}^{t} \EE\left(R^{2}_{1,0,s \wedge T_{M}} \right) ds + c_{2,0,1}\int_{0}^{t} \EE\left(R^{2}_{1,1,s \wedge T_{M}} \right) ds \\
 & \quad\,+  c_{2,0,1}\int_{0}^{t} \EE\left(R_{2,1,s \wedge T_{M}} \right) ds + c_{2,0,2}\int_{0}^{t} \EE\left(R_{2,0,s \wedge T_{M}} \right) ds.
\end{align*}
Symmetrically, we obtain for $j = 1$ that
\begin{align*}
	\sigma^{2}_{s} - R_{2,1,s} 
	& \leq 3\beta_{0}^{2} + 3\beta_{1}^{2}R_{1,0,s}^{2} + 3\beta_{1}^{2}R_{1,1,s}^{2} + 3\beta_{2}^{2}R_{2,0,s} + (3\beta_{2}^{2} - 1)R_{2,1,s} 
\end{align*}
and then
\begin{align*}
	\EE\left( R_{2,1, t \wedge T_{M}} \right) & \leq R_{2,1,0} + \lambda_{2,0} \int_{0}^{t}  \EE \Big( 3\beta_{0}^{2} + 3\beta_{1}^{2}R_{1,0,s}^{2}  \\
 & \quad\,+ 3\beta_{1}^{2}R_{1,1,s}^{2} + 3\beta_{2}^{2}R_{2,0,s} + (3\beta_{2}^{2} - 1)R_{2,1,s} \Big) ds \\
	&  =: c_{2,1,0}(t) + c_{2,1,1}\int_{0}^{t} \EE\left(R^{2}_{1,0,s \wedge T_{M}} \right) ds + c_{2,1,1}\int_{0}^{t} \EE\left(R^{2}_{1,1,s \wedge T_{M}} \right) ds \\
 & \quad\,+ c_{2,1,1}\int_{0}^{t} \EE\left(R_{2,0,s \wedge T_{M}} \right) ds + c_{2,1,2}\int_{0}^{t} \EE\left(R_{2,1,s \wedge T_{M}} \right) ds.
\end{align*}
Adding the two inequalities, we deduce
\begin{align}
	\EE & \left(  R_{2,0, t \wedge T_{M}} 
	 + R_{2,1, t \wedge T_{M}} \right) \nonumber \\
	& \leq c_{2,0}(t) + c_{2,1}\int_{0}^{t} \EE(R^{2}_{1,0, s \wedge T_{M}} + R^{2}_{1,1, s \wedge T_{M}})ds \nonumber \\
 & \quad\,+ c_{2,2} \int_{0}^{t} \EE\left(  R_{2,0, s \wedge T_{M}} + R_{2,1, s \wedge T_{M}} \right)ds. \label{4factorr2sumeq}
\end{align}
We can now add \eqref{4factorr1sumeq} and \eqref{4factorr2sumeq} to obtain
\begin{align*}
	\EE(U_{t \wedge T_{M}} ) \leq c_{0}(t) + c_{1} \int_{0}^{t} \EE(U_{s \wedge T_{M}})ds 
\end{align*}
where $c_{0}(t)$ is affine and nondecreasing in~$t$ and $c_{0}(t),c_{1}$ do not depend on~$M$. Gr\"onwall's inequality then yields
\begin{align*}
	\EE(U_{t \wedge T_{M}} ) \leq c_{0}(t) e^{c_{1} t}
\end{align*}
which is the desired bound~\eqref{eq:4factornonexplosionProof}.
\end{proof}

\begin{remark}\label{4factorgen} The results in this section generalize to volatility models having the same dynamics as~\eqref{4factor} for $(R_{1,0,t},R_{1,1,t},R_{2,0,t}, R_{2,1,t})$ but a more general functional \[\sigma_{t} = \tilde{f}(R_{1,t},R_{2,t}) = f(R_{1,0,t},R_{1,1,t},R_{2,0,t},R_{2,1,t}),\] where 
\begin{enumerate}
  \item $f: \RR^{2} \times \RR^{2}_{+} \to \RR$ is continuous, and Lipschitz on compact subsets of $\RR^{2} \times (0,\infty)^{2}$,
  \item there are $K_{1}, K_{2} \in \RR$ such that
  \begin{align}
            f(x_{0},x_{1},y_{0},y_{1})^{2} \leq K_{1} + K_{2} (x_{0}^{2} + x_{1}^{2} + y_{0} + y_{1}).\label{4factorgenbound}
  \end{align}   
\end{enumerate}
These conditions for~$f$ are satisfied in particular if $\tilde{f}$ satisfies the conditions of \cref{4factorgen}; this can be seen using inequalities~\eqref{r1convexineq} and~\eqref{r2convexineq}.

Indeed, continuity and local Lipschitz continuity imply the analogue of \cref{pr:wellposed2}. Inequality \eqref{4factorgenbound} implies that bounding~\eqref{eq:4factornonexplosionProof} is enough to bound the expectation of $\sigma^{2}_{t}$. Moreover, it implies that, for $j \in \left\{ 0, 1\right\}$, $f$ satisfies
    \begin{align}
    \lambda_{1,j}^{2}f(x_{0},x_{1},y_{0},y_{1})^{2} - 2\lambda_{1,j}x_{j}^{2} \leq K_{1,j,0} + K_{1,j,1}(x_{0}^{2} + x_{1}^{2} + y_{0} + y_{1}) \label{4factorgenboundr1}
    \end{align}
    for some $K_{1,j,0}, K_{1,j,1} \in \RR$, so that a bound analogous to~\eqref{4factorr1sumeq}  holds, as well as, for $j \in \left\{ 0, 1\right\}$,
        \begin{align*}
           f(x_{0},x_{1},y_{0},y_{1})^{2} - y_{j} \leq K_{1} + K_{2}(x_{0}^{2} + x_{1}^{2} + y_{0} + y_{1}), %
        \end{align*}
        so that a bound analogous to~\eqref{4factorr2sumeq} holds.
\end{remark}

\subsection{Failure of Positivity of $\sigma_{t}$ in~\eqref{4factor}}\label{se:4factorpos}

In \cite{GuyonLekeufack.22} it is reported that for realistic parameter values, the volatility $\sigma_{t}$ remained positive in simulations of the 4-factor model. Nevertheless, existence of a reasonable sufficient condition for $\sigma_{t}>0$ (or even just $\sigma_{t}\geq0$) in~\eqref{4factor} remains open. Below, we explain that a direct generalization of \cref{2sigmapos} fails. Indeed, in the 4-factor model, $\sigma_{t}$ follows the SDE
\begin{align}
	d\sigma_{t} = \left(-\beta_{1}\bar{\lambda}_{1}\bar{R}_{1,t} + \frac{\bar{\lambda}_{2}\beta_{2}}{2} \frac{\sigma^{2}_{t} - \bar{R}_{2,t}}{\sqrt{R_{2,t}}} \right) dt + \beta_{1}\bar{\lambda}_{1}\sigma_{t}dW_{t}\label{4factorsigmaeq}
\end{align}
where
\begin{align*}
	\bar{\lambda}_{i} := (1 - \theta_{i})\lambda_{i,0} + \theta_{i}\lambda_{i,1}\,,\qquad \bar{R}_{i,t} :=	\frac{(1 - \theta_{i})\lambda_{i,0}R_{i,0,t} + \theta_{i}\lambda_{i,1}R_{i,1,t}}{\bar{\lambda}_{i}}
\end{align*}
as seen in \cite{GuyonLekeufack.22}. The analogue of the condition in \cref{2sigmapos} is $\bar{\lambda}_{2} < 2 \bar{\lambda}_{1}$.

\begin{proposition}\label{pr:4sigmaposFail}
   Under~\eqref{4factor}, it may happen that $\sigma_{0}>0$ but $\PP(\sigma_{t}<0)>0$ for some $t>0$, even if $\bar{\lambda}_{2} < 2 \bar{\lambda}_{1}$.
\end{proposition}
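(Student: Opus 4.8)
The proposition asserts that the condition $\bar\lambda_2 < 2\bar\lambda_1$, which was sufficient for positivity in the 2-factor model (Theorem 2sigmapos), no longer guarantees $\sigma_t \geq 0$ in the 4-factor model. So I need to exhibit a counterexample: parameter values and initial conditions with $\sigma_0 > 0$, $\bar\lambda_2 < 2\bar\lambda_1$, yet $\mathbb{P}(\sigma_t < 0) > 0$ for some $t > 0$.

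**Why the 2-factor argument breaks.** Let me understand why the comparison argument fails. In the 2-factor case, the key was the drift lower bound (driftineq):
$$-\beta_1\lambda_1 R_{1,t} + \frac{\lambda_2\beta_2}{2}\frac{\sigma_t^2 - R_{2,t}}{\sqrt{R_{2,t}}} \geq -\lambda_1\sigma_t.$$

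This worked because we could rewrite $-\beta_1\lambda_1 R_{1,t} = -\lambda_1(\sigma_t - \beta_0 - \beta_2\sqrt{R_{2,t}})$, using that $\beta_1 R_{1,t} = \sigma_t - \beta_0 - \beta_2\sqrt{R_{2,t}}$.

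In the 4-factor case, the drift (4factorsigmaeq) is:
$$-\beta_1\bar\lambda_1\bar R_{1,t} + \frac{\bar\lambda_2\beta_2}{2}\frac{\sigma_t^2 - \bar R_{2,t}}{\sqrt{R_{2,t}}}.$$

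The problem is that $\bar R_{1,t}$ is NOT proportional to $R_{1,t} = \sigma_t - \beta_0 - \beta_2\sqrt{R_{2,t}}$ anymore. We have
$$R_{1,t} = (1-\theta_1)R_{1,0,t} + \theta_1 R_{1,1,t}$$
but
$$\bar R_{1,t} = \frac{(1-\theta_1)\lambda_{1,0}R_{1,0,t} + \theta_1\lambda_{1,1}R_{1,1,t}}{\bar\lambda_1}.$$

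These differ because of the weighting by $\lambda_{1,j}$. So $\bar R_{1,t}$ is a different linear combination of $R_{1,0,t}$ and $R_{1,1,t}$ than $R_{1,t}$. We can't express $-\beta_1\bar\lambda_1\bar R_{1,t}$ purely in terms of $\sigma_t$, $\sqrt{R_{2,t}}$, and nonnegative quantities. The term $\bar R_{1,t}$ can be large and positive (with $\beta_1 < 0$, this makes $-\beta_1\bar\lambda_1\bar R_{1,t}$ large negative... wait, $\beta_1 < 0$ so $-\beta_1 > 0$, so large positive $\bar R_{1,t}$ gives large positive drift; but large NEGATIVE $\bar R_{1,t}$ gives large negative drift).

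**Strategy for the counterexample.** The idea: make the drift of $\sigma_t$ strongly negative while $\sigma_t$ is small and positive, and the diffusion small, so $\sigma_t$ crosses below zero with positive probability.

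Let me think about when $\bar R_{1,t}$ can be very negative while $R_{1,t}$ (hence $\sigma_t$) stays small/positive. Since $R_{1,t}$ and $\bar R_{1,t}$ are DIFFERENT combinations of $R_{1,0,t}, R_{1,1,t}$, we can arrange:
- $R_{1,t} = (1-\theta_1)R_{1,0,t} + \theta_1 R_{1,1,t}$ to be moderate (keeping $\sigma_t$ positive),
- but $\bar R_{1,t}$ to be very negative.

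With $\beta_1 < 0$, then $-\beta_1\bar\lambda_1\bar R_{1,t}$ is very negative, pushing $\sigma_t$ down.

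Concretely, take $R_{1,0,0}$ large negative and $R_{1,1,0}$ large positive. If $\lambda_{1,0} \gg \lambda_{1,1}$ and $\theta_1$ is chosen so that the $R_{1,t}$-combination cancels but the $\bar R_{1,t}$-combination (weighted toward $\lambda_{1,0}$) picks up the large negative $R_{1,0,0}$:
- $R_{1,0} = (1-\theta_1)R_{1,0,0} + \theta_1 R_{1,1,0}$: choose $\theta_1$ to make this small.
- $\bar R_{1,0} = \frac{(1-\theta_1)\lambda_{1,0}R_{1,0,0} + \theta_1\lambda_{1,1}R_{1,1,0}}{\bar\lambda_1}$: with $\lambda_{1,0}$ large, this is dominated by the negative $R_{1,0,0}$ term.

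**Detailed plan of the proof.**

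\begin{proof}[Proof plan]
The plan is to construct an explicit counterexample. First I would exploit the structural difference between $R_{1,t}=(1-\theta_1)R_{1,0,t}+\theta_1 R_{1,1,t}$ and $\bar R_{1,t}=\bar\lambda_1^{-1}[(1-\theta_1)\lambda_{1,0}R_{1,0,t}+\theta_1\lambda_{1,1}R_{1,1,t}]$: these are distinct linear combinations of $R_{1,0,t}$ and $R_{1,1,t}$ whenever $\lambda_{1,0}\neq\lambda_{1,1}$. In the 2-factor case the comparison argument of Theorem~\ref{2sigmapos} rested on the identity $\beta_1 R_{1,t}=\sigma_t-\beta_0-\beta_2\sqrt{R_{2,t}}$, which let the drift term $-\beta_1\lambda_1 R_{1,t}$ be rewritten in terms of $\sigma_t$ alone plus nonnegative remainders. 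No such identity is available for $\bar R_{1,t}$, and this is precisely the gap I will widen into a counterexample.

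The key step is to choose initial conditions $R_{1,0,0}$ and $R_{1,1,0}$ so that $R_{1,0}$ is moderate (keeping $\sigma_0=\beta_0+\beta_1 R_{1,0}+\beta_2\sqrt{R_{2,0}}>0$) while $\bar R_{1,0}$ is strongly negative. Concretely, I take $\lambda_{1,0}\gg\lambda_{1,1}$ and pick $\theta_1$ so that the unweighted combination $(1-\theta_1)R_{1,0,0}+\theta_1 R_{1,1,0}$ nearly cancels, say equals a small value making $\sigma_0$ a small positive number; simultaneously the $\lambda$-weighted combination $\bar R_{1,0}$ is dominated by the large-$\lambda_{1,0}$ coefficient multiplying a large negative $R_{1,0,0}$, forcing $\bar R_{1,0}\ll 0$. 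With $\beta_1<0$ this makes $-\beta_1\bar\lambda_1\bar R_{1,0}$ a large \emph{negative} number, so the drift of $\sigma_t$ in~\eqref{4factorsigmaeq} is strongly negative at $t=0$. I would also pick $R_{2,0}$ bounded away from $0$ so that the $\tfrac{\bar\lambda_2\beta_2}{2}(\sigma_t^2-\bar R_{2,t})/\sqrt{R_{2,t}}$ term is controlled (and, unlike the 2-factor case, $\bar R_{2,t}\neq R_{2,t}$ so this term need not help positivity either).

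Having arranged a strongly negative initial drift with a controlled (small) diffusion coefficient $\beta_1\bar\lambda_1\sigma_0$, I would show $\mathbb{P}(\sigma_t<0)>0$ for small $t>0$ by a short-time argument. The cleanest route is continuity of paths together with the fact that $\sigma_t$ started near a small positive value has deterministic drift pushing it below $0$ on an event of positive probability: by the law of the iterated logarithm / support theorem for the driving Brownian increment, on the event that $W$ stays small up to time $t$, the drift dominates and $\sigma_t<0$. Formally I would bound $\sigma_t\le\sigma_0+\int_0^t(\text{drift})\,ds+\beta_1\bar\lambda_1\int_0^t\sigma_s\,dW_s$ and, using that the drift is continuous and strictly negative in a neighborhood of the initial state while the martingale part is small with positive probability, conclude the sign change. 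The main obstacle is making the short-time drift estimate rigorous: the drift depends on the full state $(R_{1,0,t},R_{1,1,t},R_{2,0,t},R_{2,1,t})$, which itself moves, so I must ensure that over a sufficiently short interval the state stays in a region where the drift remains bounded above by a strictly negative constant, which follows from path continuity and the explicit ODE/SDE dynamics of the four factors. A concrete numerical choice of the nine parameters (respecting $\beta_1<0$, $\beta_0,\beta_2\ge0$, $\theta_i\in[0,1]$, all $\lambda\ge0$, and $\bar\lambda_2<2\bar\lambda_1$) together with these initial values then exhibits the claimed failure.
\end{proof}
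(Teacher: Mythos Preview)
Your core mechanism is exactly the one the paper uses: exploit that $R_{1,t}$ and $\bar R_{1,t}$ are genuinely different linear combinations of $R_{1,0,t},R_{1,1,t}$ when $\lambda_{1,0}\neq\lambda_{1,1}$, and choose initial data so that $\bar R_{1,0}<0$ (making the drift of $\sigma$ negative at $t=0$) while $\sigma_0$ is small and positive.

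The execution of the short-time step differs. The paper first sets $\beta_0=0$ so that $\sigma_0=0$ exactly; then the diffusion coefficient $\beta_1\bar\lambda_1\sigma_0$ vanishes and the strictly negative drift immediately pushes $\sigma_t$ below zero. To recover the case $\sigma_0>0$, the paper perturbs $\beta_0$ and invokes continuity of the solution in the parameter (after truncating the non-Lipschitz terms to apply a standard result). Your route avoids that parameter-continuity lemma by taking $\sigma_0>0$ small from the start and arguing directly that, on an event of positive probability where the Brownian increment stays small, the drift dominates over a short interval while the state remains in a region where the drift is uniformly negative. Both arguments are sound; the paper's is a bit slicker because at $\sigma_0=0$ there is literally no noise to control, at the cost of citing an external continuity result, whereas yours is self-contained but requires the localization/short-time estimate you sketch.
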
 

\begin{proof}
  We choose initial conditions $R_{1,0,0}<0$ and $R_{1,1,0}>0$, and coefficients $\theta_{1},\lambda_{1,0},\lambda_{1,1}$, such that $R_{1,0}> 0$ and $\bar{R}_{1,0}<0$. Next, choose $\beta_{1}=-1$ (say), and then $\beta_{2}>0$ such that $\beta_{1}R_{1,0} + \beta_{2} \sqrt{R_{2,0}}=0$. Consider for the moment $\beta_{0}:=0$, then the preceding identity means that $\sigma_{0}=0$. 
  Inspecting~\eqref{4factorsigmaeq}, we see that at $t=0$, the volatility vanishes while the drift rate is
\begin{align*}
	-\beta_{1}\bar{\lambda}_{1}\bar{R}_{1,t} - \frac{\bar{\lambda}_{2}\beta_{2}}{2} \frac{\bar{R}_{2,t}}{\sqrt{R_{2,t}}} <0.
\end{align*}  
  By continuity of the paths, it follows that $\PP(\sigma_{t}<0)>0$ for all $t>0$ sufficiently small.
  
  Next, we modify the above by choosing $\beta_{0}$ strictly positive, so that $\sigma_{0}=\beta_{0}>0$. We may see $\beta_{0}$ as a parameter of the SDE determining $\sigma_{t}$. If the solution is continuous with respect to $\beta_{0}$, it follows that $\PP(\sigma_{t}<0)>0$ for $\beta_{0}>0$ and $t$ sufficiently small. Continuity is a standard result for SDEs with Lipschitz coefficients (e.g., \cite[Section~4.5]{Kunita.90}). To see that the Lipschitz result is sufficient, note that for the present purpose of showing that $\PP(\sigma_{t}<0)>0$ for some small $t>0$, we may truncate the non-Lipschitz coefficients in~\eqref{4factor}; that is, we replace $\sqrt{R_{2,t}}$ by $\sqrt{ R_{2,t} \vee \delta}$ and $\sigma_{t}^{2}$ by $\sigma_{t}^{2}\wedge \delta^{-1}$ for a small constant $\delta>0$. 
\end{proof}

\section{Martingale Property of the Price Process}\label{se:martprop}

In this section we study the martingale property of the price process $(X_{t})_{t\geq 0}$; cf.\ \cref{martprop}. Following~\cite[Proof of Lemma 4.2]{Sin.98} and \cite[Section 5.3]{JarrowProtterSanMartin.22}, the idea is to rephrase the martingale property into non-explosiveness of another process, and establish the latter. 
Given a continuous adapted process $(\nu_{t})_{t\geq0}$, consider the exponential local martingale $(X_{t})_{t\geq 0}$ given by 
\begin{align}
    dX_{t} = \nu_{t}X_{t}dW_{t}, \quad X_{0} = x_{0} > 0.\label{priceprocess}
\end{align}
For $M > 0$, we define the stopping time
\begin{align}
    T_{M} := \inf \left\{ t \geq 0 \text{ : } |\nu_{t}| \geq M \right\}. \label{sigmatime}
\end{align}
Given also $t\geq0$, Novikov's condition allows us to define a probability measure $\tilde{\PP}_{M,t}\ll \PP$ by
\begin{align*}
    \frac{d\tilde{\PP}_{M,t}} {d\PP} = \exp \left\{ \int_{0}^{T_{M} \wedge t} \nu_{s} dW_{s} - \frac{1}{2} \int_{0}^{T_{M} \wedge t} \nu^{2}_{s} ds \right\}.
\end{align*}
We have $T_{M}\to\infty$ a.s.\ given that $(\nu_{t})_{t\geq0}$ does not explode in finite time. The martingale property of $(X_{t})_{t\geq 0}$ is characterized as follows.

\begin{lemma}\label{le:trueMartChar}
  For the local martingale  $(X_{t})_{t\geq 0}$ given by~\eqref{priceprocess}, the following are equivalent:
  \begin{enumerate}
  \item $(X_{t})_{t\geq 0}$ is a martingale,
  \item $\liminf_{M \to \infty}\EE(X_{T_{M}}\1_{T_{M} < t}) = 0$ for all $t > 0$,
  \item $\liminf_{M \to \infty} \tilde{\PP}_{M,t}( T_{M} < t) = 0$ for all $t > 0$.
  \end{enumerate}
\end{lemma}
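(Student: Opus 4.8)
The plan is to exploit that $(X_{t})_{t\geq0}$, being a nonnegative local martingale with $X_{0}=x_{0}$, is a supermartingale, so that $X$ is a true martingale if and only if $\EE(X_{t})=x_{0}$ for every $t>0$. I would quantify the nonnegative gap $x_{0}-\EE(X_{t})$ by means of the stopped processes $X_{\cdot\wedge T_{M}}$. Since $\nu$ is bounded by $M$ on $[0,T_{M}]$, Novikov's condition shows that $X_{\cdot\wedge T_{M}}$ is a genuine martingale, whence $\EE(X_{t\wedge T_{M}})=x_{0}$ for all $M$ and $t$. Splitting on $\{T_{M}<t\}$ and its complement, and using $t\wedge T_{M}=T_{M}$ on the former and $t\wedge T_{M}=t$ on the latter, yields the identity
\begin{equation*}
  x_{0}=\EE\!\left(X_{t}\1_{T_{M}\geq t}\right)+\EE\!\left(X_{T_{M}}\1_{T_{M}<t}\right).
\end{equation*}

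First I would establish the equivalence (i)$\iff$(ii). Because $\nu$ is continuous, $\sup_{s\leq t}|\nu_{s}|<\infty$ pathwise, so $\1_{T_{M}\geq t}\uparrow1$ almost surely as $M\to\infty$; monotone convergence then gives $\EE(X_{t}\1_{T_{M}\geq t})\to\EE(X_{t})$. Consequently the limit $\lim_{M\to\infty}\EE(X_{T_{M}}\1_{T_{M}<t})$ exists and equals $x_{0}-\EE(X_{t})$. In particular the $\liminf$ appearing in (ii) is an honest limit, and for a given $t$ condition (ii) holds precisely when $\EE(X_{t})=x_{0}$; ranging over $t>0$ gives the equivalence with (i).

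Next I would prove (ii)$\iff$(iii) by identifying the tilting density. Writing $X$ as the stochastic exponential $X_{t}=x_{0}\exp\{\int_{0}^{t}\nu_{s}\,dW_{s}-\tfrac{1}{2}\int_{0}^{t}\nu_{s}^{2}\,ds\}$, one reads off $d\tilde{\PP}_{M,t}/d\PP=X_{t\wedge T_{M}}/x_{0}$ directly from the definition of $\tilde{\PP}_{M,t}$. Since $t\wedge T_{M}=T_{M}$ on $\{T_{M}<t\}$, a short computation gives
\begin{equation*}
  \tilde{\PP}_{M,t}(T_{M}<t)=\EE\!\left(\frac{X_{t\wedge T_{M}}}{x_{0}}\1_{T_{M}<t}\right)=\frac{1}{x_{0}}\,\EE\!\left(X_{T_{M}}\1_{T_{M}<t}\right).
\end{equation*}
As $x_{0}>0$, the two $\liminf$'s in (ii) and (iii) differ only by the positive factor $x_{0}$, so they vanish together.

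The routine measure-theoretic ingredients (Novikov for the stopped exponential and the decomposition identity) are straightforward; the point requiring care is the passage $M\to\infty$ in the first equivalence, where continuity of $\nu$ is used to force $\1_{T_{M}\geq t}\to1$ and thereby convert the supermartingale defect $x_{0}-\EE(X_{t})$ into exactly the quantity appearing in (ii). Once this is secured, the equivalence (ii)$\iff$(iii) is immediate from the explicit Radon--Nikodym density, which is the heart of the Sin / Jarrow--Protter--San Mart\'in change-of-measure argument.
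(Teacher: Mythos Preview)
Your proof is correct and follows essentially the same route as the paper: supermartingale argument, the decomposition $x_{0}=\EE(X_{t}\1_{T_{M}\geq t})+\EE(X_{T_{M}}\1_{T_{M}<t})$, monotone convergence as $M\to\infty$, and the identification $\tilde{\PP}_{M,t}(T_{M}<t)=x_{0}^{-1}\EE(X_{T_{M}}\1_{T_{M}<t})$. If anything, your justification that $X_{\cdot\wedge T_{M}}$ is a martingale via Novikov is more precise than the paper's phrasing (it calls the stopped process ``bounded,'' which is not literally true), and your observation that the $\liminf$ is in fact a limit is a nice clarification.
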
 

\begin{proof}
As $(X_{t})_{t\geq 0}$ is a nonnegative local martingale, it is a supermartingale by Fatou's lemma, hence a martingale if and only if $x_{0} = \EE(X_{t})$ for all $t>0$. The bounded process $(X_{t \wedge T_{M}})_{t\geq 0}$ is a martingale, hence
\begin{align*}
    x_{0} = \EE(X_{t \wedge T_{M}}) = \EE(X_{t}\1_{T_{M} \geq t}) + \EE(X_{T_{M}}\1_{T_{M} < t}).
\end{align*}
Using monotone convergence for $\EE(X_{t}\1_{T_{M} \geq t})$, this yields
\begin{align*}
    x_{0} = \EE(X_{t}) + \liminf_{M \to \infty}\EE(X_{T_{M}}\1_{T_{M} < t})
\end{align*}
and now the equivalence of (i) and (ii) follows. For fixed $M > 0$ and $t \geq 0$,
\begin{align}
    \EE(X_{T_{M}}\1_{T_{M} < t}) & = \EE\left( x_{0} \exp \left\{ \int_{0}^{T_{M}} \nu_{s} dW_{s} - \frac{1}{2} \int_{0}^{T_{M}} \nu^{2}_{s} ds \right\} \1_{T_{M} < t} \right) \nonumber \\
    & =  x_{0}  \EE\left(\exp \left\{ \int_{0}^{T_{M} \wedge t} \nu_{s} dW_{s} - \frac{1}{2} \int_{0}^{T_{M} \wedge t} \nu^{2}_{s} ds \right\} \1_{T_{M} < t} \right) \nonumber \\
    & = x_{0}\tilde{\PP}_{M,t}( T_{M} < t),\label{measurechange}
\end{align}
showing the equivalence of (ii) and (iii).
\end{proof} 

\begin{proposition}\label{pr:trueMart2}
   Let $(\sigma_{t})_{t\geq0}$ be given by~\eqref{2factor}. For a fixed constant $C\geq 0$, define the stopping time
$
    \tau := \inf \left\{ t \geq 0:\sigma_{t} < -C \right\} 
$
and set $\nu_{t}:=\sigma_{t\wedge \tau}$. Then the exponential local martingale $(X_{t})_{t\geq 0}$ of~\eqref{priceprocess} is a true martingale.
\end{proposition}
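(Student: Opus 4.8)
The plan is to verify condition (iii) of \cref{le:trueMartChar}, namely $\liminf_{M\to\infty}\tilde{\PP}_{M,t}(T_M<t)=0$, by controlling the volatility under the tilted measure. First I would apply Girsanov's theorem: under $\tilde{\PP}_{M,t}$, the process $\tilde{W}_s:=W_s-\int_0^s\nu_r\,dr$ is a Brownian motion up to $T_M\wedge t$. For $M>C$ one has $T_M<\tau$ on $\{T_M<\infty\}$ (before $\tau$ we have $\nu=\sigma\ge -C$, so $|\nu|\ge M$ forces $\sigma\ge M>-C$; after $\tau$ we have $\nu\equiv -C$), so only the regime $s<\tau$, where $\nu_s=\sigma_s$, is relevant. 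There the $\tilde{\PP}_{M,t}$-dynamics of $(R_{1},R_{2})$ are those of \eqref{2factor} with the single extra drift $\lambda_1\sigma_s^2$ in $R_1$ arising from $dW=d\tilde{W}+\sigma\,ds$; equivalently, the $\sigma$-equation \eqref{2factorsigmaeq} acquires the additional drift $\beta_1\lambda_1\sigma^2\le 0$. The goal is then a moment bound on $\sigma$, uniform in $M$, under $\tilde{\PP}_{M,t}$.

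The key device I would use is the auxiliary variable $Z:=\beta_0+\beta_1R_1=\sigma-\beta_2\sqrt{R_2}$, whose tilted dynamics are
\[
dZ=\beta_1\lambda_1\sigma\,d\tilde{W}+\bigl(-\lambda_1 Z+\lambda_1\beta_0+\beta_1\lambda_1\sigma^2\bigr)\,dt,
\]
so that the Girsanov term enters the drift with the favorable sign $\beta_1\lambda_1\sigma^2\le 0$. I would then run a Lyapunov estimate with
\[
V:=(Z_+)^2+R_2 .
\]
Applying the It\^o--Tanaka formula, two structural facts make the drift of $V$ at most affine in $V$. First, because $(Z_+)^2$ has vanishing first and second derivatives on $\{Z<0\}$, the dangerous cross term $2\beta_1\lambda_1 Z\sigma^2$ appears only on $\{Z>0\}$, where it is $\le 0$ and can be discarded; on $\{Z>0\}$ the remaining contribution, together with the quadratic-variation term $\beta_1^2\lambda_1^2\sigma^2$ and the bound $\sigma^2\le 2Z^2+2\beta_2^2R_2$, is dominated by $c(1+V)$. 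Second, the stopping at $\tau$ gives $\sigma\ge -C$, hence $Z\ge -C-\beta_2\sqrt{R_2}$ and therefore $Z^2\le 2C^2+2\beta_2^2R_2$ on $\{Z\le 0\}$; this bounds $\sigma^2$ by an affine function of $V$ everywhere, so that the drift $\lambda_2\sigma^2-\lambda_2R_2$ of $R_2$ is likewise $\le c(1+V)$. Combining, the drift of $V$ is at most $c(1+V)$ for a constant $c=c(t)$ independent of $M$.

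With this bound in hand, the stochastic integral in $V_{T_M\wedge t}$ is a genuine martingale (its integrand $2Z_+\beta_1\lambda_1\sigma$ is bounded up to $T_M$), so Gr\"onwall's inequality yields $\EE^{\tilde{\PP}_{M,t}}(V_{T_M\wedge t})\le C(t)$ uniformly in $M$. On $\{T_M<t\}$ one has $\sigma_{T_M}=M$, whence $M\le Z_{+,T_M}+\beta_2\sqrt{R_{2,T_M}}$ and thus $M^2\le 2\max(1,\beta_2^2)\,V_{T_M}$. Markov's inequality then gives $\tilde{\PP}_{M,t}(T_M<t)\le 2\max(1,\beta_2^2)C(t)/M^2\to 0$, which is condition (iii) of \cref{le:trueMartChar} and establishes the martingale property.

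The main obstacle is exactly the cross term $2\beta_1\lambda_1 Z\sigma^2$ (equivalently the nonlinear coupling of $R_1$ and $R_2$): on the set where $Z<0$ and $\sigma$ is large it grows like $R_2^{3/2}$, so that neither $\EE(R_1^2+R_2)$ nor $\EE(Z^2+R_2)$ closes under Gr\"onwall. The resolution hinges on (i) replacing $Z^2$ by its one-sided truncation $(Z_+)^2$, confining the bad term to the region where it has the right sign, and (ii) using $\sigma\ge-C$ to dominate $Z^2$ by $R_2$ on the complementary region. Both steps exploit the single-timescale structure of \eqref{2factor}; as indicated in \cref{rk:tiltedPDV4}, this is precisely where the argument does not carry over to the 4-factor model.
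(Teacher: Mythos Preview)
Your proposal is correct, and it takes a genuinely different route from the paper's proof. Both arguments start identically (Girsanov, the observation that $T_M<\tau$ for $M>C$, and the tilted dynamics with the extra drift $\lambda_1\sigma^2$ in $R_1$), but then diverge.

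The paper bounds the \emph{first} moment of $\sigma$ under the tilted measure. It introduces a small $\hat\beta_2>0$ with $\alpha:=\beta_1\lambda_1+\hat\beta_2\lambda_2<0$, bounds $\sigma\le\beta_0+\bar\beta_2+\beta_1R_1+\hat\beta_2R_2$, and computes the drift of $\beta_1R_1+\hat\beta_2R_2$: the $\sigma^2$-term enters with coefficient $\alpha<0$, and completing the square twice (first in $\sqrt{R_2}$, then in $R_1$) shows this drift is bounded above by a constant~$L$. Hence $\EE^{\tilde\PP_{M,t}}(\sigma_{t\wedge S_M})\le K_0+K_1t$ and $\tilde\PP_{M,t}(T_M<t)\le (C+K_0+K_1t)/M$.

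Your argument is a quadratic Lyapunov estimate: with $V=(Z_+)^2+R_2$ and It\^o--Tanaka, the dangerous cross term $2\beta_1\lambda_1 Z\sigma^2$ only survives on $\{Z>0\}$ where it has the right sign, while on $\{Z\le 0\}$ the constraint $\sigma\ge -C$ controls $\sigma^2$ by $R_2$. This yields a drift bound affine in $V$, Gr\"onwall, and the sharper rate $\tilde\PP_{M,t}(T_M<t)=O(M^{-2})$. Your device of one-sided truncation is arguably more robust than the paper's completing-the-square trick (which tacitly needs $\beta_1\lambda_1<0$ to pick $\hat\beta_2$), at the cost of invoking It\^o--Tanaka. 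One small cosmetic point: since your drift bound relies on $\sigma\ge -C$, the Gr\"onwall step should be run up to $T_M\wedge\tau\wedge t$ rather than $T_M\wedge t$; this is harmless because $T_M<\tau$ on $\{T_M<t\}$, so the Markov-inequality step is unchanged.
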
 

\begin{proof}
Define the stopping time
\begin{align*}
    S_{M} := \inf \left\{ t \geq 0 \text{ : } |\sigma_{t}| \geq M \right\}. %
\end{align*}
Fix $t, M>0$. By Girsanov's theorem, 
\begin{align*}
    \tilde{W}_{s} = W_{s} - \int_{0}^{s}\sigma_{r}dr \text{ , } \quad 0 \leq s < t \wedge S_{M}
\end{align*}
is a Brownian motion under $\tilde{\PP}_{M,t}$ up to time $t \wedge S_{M}$. The system~\eqref{2factor} for $(R_{1,s},R_{2,s})$ can be stated up to time $t \wedge S_{M}$ as
\begin{align*}
    \sigma_{s} & = \beta_{0} + \beta_{1}R_{1,s} + \beta_{2}\sqrt{R_{2,s}}  \nonumber \\
	dR_{1,s} & = \lambda_{1}\sigma_{s}d\tilde{W}_{s} + \lambda_{1}(\sigma^{2}_{s} - R_{1,s})ds  \\
	dR_{2,s} & = \left(\lambda_{2}\sigma_{s}^{2} - \lambda_{2}R_{2,s} \right) ds.  \nonumber
\end{align*}
By the same arguments as in \cref{weakexist} and \cref{pathwiseUnique}, this system has a unique strong solution up to time $t \wedge S_{M}$ under $\tilde{\PP}_{M,t}$. The solution $(\sigma_{s})$ of~\eqref{2factor} up to time $t \wedge S_{M}$ under $\tilde{\PP}_{M,t}$ has the same distribution as the solution of~\eqref{2factortilde} in \cref{le:tiltedPDV2} below under~$\PP$. For $M>C$, we also note that $t \wedge S_{M}\geq t \wedge T_{M}\wedge \tau$. 
By the assertion of \cref{le:tiltedPDV2}, it then follows that $\lim_{M\to\infty}\tilde{\PP}_{M,t}(T_{M} < t) = 0$ for any $t\geq0$, which by \cref{le:trueMartChar} shows that $(X_{t})_{t\geq 0}$ is a martingale.
\end{proof}

\begin{lemma}\label{le:tiltedPDV2}
  The following SDE under $\PP$ has a unique strong solution $(\sigma_{s})_{s\geq0}$ up to a possible time of explosion:
\begin{align}
    \sigma_{s} & = \beta_{0} + \beta_{1}R_{1,s} + \beta_{2}\sqrt{R_{2,s}}  \nonumber \\
	dR_{1,s} & = \lambda_{1}\sigma_{s}dW_{s} + \lambda_{1}(\sigma^{2}_{s} - R_{1,s})ds  \label{2factortilde} \tag{\mbox{$\text{2-PDV}{\tilde{\phantom{e}}}$}} \\
	dR_{2,s} & = \left(\lambda_{2}\sigma_{s}^{2} - \lambda_{2}R_{2,s} \right) ds.  \nonumber
\end{align}  
   Given $C\geq0$, define $\tau=\inf\{t\geq0: \sigma_{s}<-C\}$. Then the process $(\sigma_{s\wedge \tau})_{s\geq0}$ a.s.\ does not explode in finite time.
\end{lemma}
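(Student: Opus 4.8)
The first assertion is proved exactly as for~\eqref{2factor}. The coefficients of~\eqref{2factortilde} are continuous on $\RR\times\RR_{+}$ and locally Lipschitz on $\RR\times(0,\infty)$, and the equation for $R_{2}$ is unchanged, so $R_{2,s}>0$ by the analogue of~\eqref{r2positive}. Hence the arguments of \cref{weakexist,pathwiseUnique} give weak existence and pathwise uniqueness, and \cref{co:wellposed2} yields strong existence and uniqueness up to explosion. For the non-explosion statement, the only new feature compared with~\eqref{2factor} is the extra drift $\lambda_{1}\sigma_{s}^{2}$ in the equation for $R_{1}$. Feeding the moment scheme of \cref{2factornonexplosion} with this term produces a cubic contribution $2\lambda_{1}R_{1,s}\sigma_{s}^{2}$ in the drift of $R_{1,s}^{2}$ that a linear Gr\"onwall bound cannot absorb; removing this obstruction is the crux of the argument.

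The plan is to eliminate the offending drift by a change of variable. I first dispose of the degenerate cases, which are elementary: if $\lambda_{1}=0$ or $\beta_{1}=0$, then $\sigma_{s}$ depends only on $R_{2,s}$ (with $R_{1}$ constant in the first case) and $(R_{2,s})$ solves an autonomous equation with at most linear growth, so there is no explosion; if $\lambda_{2}=0$, then $R_{2}$ is constant and the resulting one-dimensional equation for $R_{1}$ has the strongly dissipative cubic drift term $2\lambda_{1}R_{1,s}\sigma_{s}^{2}\sim 2\lambda_{1}\beta_{1}^{2}R_{1,s}^{3}$ for $R_{1,s}<0$, while $\sigma_{s}\ge -C$ bounds $R_{1,s}$ from above. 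I may therefore assume $\beta_{1}<0$ and $\lambda_{1},\lambda_{2}>0$ and set
\begin{align*}
  Q_{s}:=R_{1,s}-\tfrac{\lambda_{1}}{\lambda_{2}}R_{2,s}.
\end{align*}
Because $R_{1}$ and $R_{2}$ mean-revert to the common target $\sigma_{s}^{2}$, the $\sigma^{2}$-drifts cancel, giving
\begin{align*}
  dQ_{s}=\lambda_{1}\big(R_{2,s}-R_{1,s}\big)\,ds+\lambda_{1}\sigma_{s}\,dW_{s}
  =\lambda_{1}\Big[\big(1-\tfrac{\lambda_{1}}{\lambda_{2}}\big)R_{2,s}-Q_{s}\Big]\,ds+\lambda_{1}\sigma_{s}\,dW_{s}.
\end{align*}

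The key observation is what this change of variable does to $\sigma$. Substituting $R_{1,s}=Q_{s}+\tfrac{\lambda_{1}}{\lambda_{2}}R_{2,s}$ yields
\begin{align*}
  \sigma_{s}=\beta_{0}+\beta_{1}Q_{s}+\frac{\beta_{1}\lambda_{1}}{\lambda_{2}}R_{2,s}+\beta_{2}\sqrt{R_{2,s}},
\end{align*}
in which the coefficient $\beta_{1}\lambda_{1}/\lambda_{2}$ of $R_{2,s}$ is strictly negative. Thus a large $R_{2,s}$ forces $\sigma_{s}$ to be very negative, and the stopping rule turns into an a priori bound: for $s<\tau$ we have $\sigma_{s}\ge -C$, whence, writing $Q_{s}^{-}:=\max(-Q_{s},0)$,
\begin{align*}
  \frac{|\beta_{1}|\lambda_{1}}{\lambda_{2}}R_{2,s}-\beta_{2}\sqrt{R_{2,s}}\le \beta_{0}+C-\beta_{1}Q_{s}\le \beta_{0}+C+|\beta_{1}|Q_{s}^{-}.
\end{align*}
Since $r\mapsto \tfrac{|\beta_{1}|\lambda_{1}}{\lambda_{2}}r-\beta_{2}\sqrt{r}$ is bounded below and grows linearly, I obtain a bound $R_{2,s}\le C_{1}+C_{2}Q_{s}^{-}$ for $s<\tau$, with $C_{1},C_{2}$ depending only on the parameters and on $C$. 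This linear control of $R_{2}$ by $Q$ is the feature special to the single-timescale model, and the reason the argument does not carry over to~\eqref{4factor} (cf.\ \cref{rk:tiltedPDV4}).

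It then remains to run a moment estimate for $Q$, which is now linear. Combining $R_{2,s}\le C_{1}+C_{2}Q_{s}^{-}$ with $\sigma_{s}\ge -C$ and the upper bound $\sigma_{s}\le \beta_{0}+\beta_{1}Q_{s}+\beta_{2}\sqrt{R_{2,s}}$ gives a priori estimates $\sigma_{s}^{2}\le C_{3}+C_{4}Q_{s}^{2}$ and $|Q_{s}R_{2,s}|\le C_{5}+C_{6}Q_{s}^{2}$ for $s<\tau$. By It\^o's formula the drift of $Q_{s}^{2}$ equals $2\lambda_{1}\big(1-\tfrac{\lambda_{1}}{\lambda_{2}}\big)Q_{s}R_{2,s}-2\lambda_{1}Q_{s}^{2}+\lambda_{1}^{2}\sigma_{s}^{2}$, which by these estimates is at most $a+bQ_{s}^{2}$ for constants $a,b$. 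With $S_{M}:=\inf\{s\ge0:|Q_{s}|\ge M\}$, the localization, Fubini and Gr\"onwall steps of \cref{2factornonexplosion} applied at $t\wedge\tau\wedge S_{M}$ yield $\EE(Q_{t\wedge\tau\wedge S_{M}}^{2})\le (Q_{0}^{2}+a/b)e^{bt}$ uniformly in $M$, and hence $M^{2}\PP(S_{M}\le t\wedge\tau)\le (Q_{0}^{2}+a/b)e^{bt}\to 0$. Therefore $(Q_{s\wedge\tau})$ does not explode; by $R_{2,s}\le C_{1}+C_{2}Q_{s}^{-}$ neither does $(R_{2,s\wedge\tau})$, and then neither does $R_{1,s\wedge\tau}=Q_{s\wedge\tau}+\tfrac{\lambda_{1}}{\lambda_{2}}R_{2,s\wedge\tau}$, so $(\sigma_{s\wedge\tau})$ is non-explosive. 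I expect the change of variable $Q$ — together with the sign of the $R_{2}$-coefficient in $\sigma$ that converts $\sigma\ge -C$ into a bound on $R_{2}$ — to be the main point; everything after it is the routine localized Gr\"onwall estimate already used for~\eqref{2factor}.
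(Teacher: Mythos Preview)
Your proof is correct and takes a genuinely different route from the paper's. The paper does not change variables; instead it bounds $\EE(\sigma_{t\wedge S_{M}})$ directly by dominating $\beta_{2}\sqrt{R_{2,s}}\le \bar\beta_{2}+\hat\beta_{2}R_{2,s}$ for a small $\hat\beta_{2}>0$ chosen so that $\alpha:=\beta_{1}\lambda_{1}+\hat\beta_{2}\lambda_{2}<0$, and then computes the drift of $\beta_{1}R_{1,s}+\hat\beta_{2}R_{2,s}$. That drift is a quadratic form in $(R_{1,s},\sqrt{R_{2,s}})$ with negative leading coefficients (the paper completes the square twice), so it is bounded above by a constant~$L$, giving $\EE(\sigma_{t\wedge S_{M}})\le K_{0}+K_{1}t$ uniformly in~$M$. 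Your approach instead cancels the $\sigma^{2}$-drift via $Q=R_{1}-(\lambda_{1}/\lambda_{2})R_{2}$ and observes that in $(Q,R_{2})$-coordinates the coefficient of $R_{2}$ in~$\sigma$ is $\beta_{1}\lambda_{1}/\lambda_{2}<0$, turning the constraint $\sigma\ge -C$ on $[0,\tau)$ into a linear a~priori bound $R_{2}\le C_{1}+C_{2}Q^{-}$; a standard second-moment Gr\"onwall estimate on~$Q$ then closes. Both arguments ultimately exploit $\beta_{1}<0$ (the paper to make $\alpha<0$, you to make the $R_{2}$-coefficient negative). The paper's first-moment route is shorter and avoids the case split on degenerate parameters; your change of variable is more transparent about where the stopping at $-C$ is used and why the mechanism is specific to matched mean-reversion targets, which dovetails with \cref{rk:tiltedPDV4}. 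Your treatment of the degenerate cases $\lambda_{1}=0$, $\beta_{1}=0$, $\lambda_{2}=0$ is terse but can be made rigorous along the lines you sketch (for $\lambda_{2}=0$, note that on $[0,\tau)$ one has $R_{1}$ bounded above and the drift of $R_{1}^{2}$ bounded by $c+cR_{1}^{2}$, so the usual localized Gr\"onwall argument applies).
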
 

\begin{proof}
  Existence and uniqueness again follows as in \cref{weakexist} and \cref{pathwiseUnique}. For $M>C$, let
  \begin{align*}
    T_{M}&:=\inf\{s\geq 0: |\sigma_{s\wedge \tau}|\geq M\}=\inf\{s\geq 0: \sigma_{s}\geq M\},\\
    S_{M}&:=\inf\{s\geq 0: |\sigma_{s}|\geq M\}.
  \end{align*}
  Note that $\tau\leq T_{M}$ implies $T_{M}=\infty$. For fixed $t\geq0$, we can then write
    \begin{align}
        M\PP(T_{M} \leq t) 
        &= \EE(\sigma_{t \wedge T_{M}}\1_{T_{M} \leq t}) = \EE(\sigma_{t \wedge T_{M}}\1_{T_{M} \leq t}\1_{\tau > T_{M}}) \nonumber\\
        &= \EE(\sigma_{t \wedge S_{M}}\1_{T_{M} \leq t}\1_{\tau > T_{M}}) 
        \leq \EE((C+\sigma_{t \wedge S_{M}})\1_{T_{M} \leq t}\1_{\tau > T_{M}}) \nonumber\\
        &\leq \EE(C+\sigma_{t \wedge S_{M}}) = C+ \EE(\sigma_{t \wedge S_{M}}).
        \label{2factortildeprobineq}
    \end{align}
    Below, we show a bound $\EE(\sigma_{t \wedge S_{M}})\leq K_{0} + K_{1}t$ that is uniform in~$M$. Then, \eqref{2factortildeprobineq} implies that $\lim_{M\to\infty} \PP(T_{M} \leq t)=0$, which is the claim.
    
    Choose $\hat{\beta}_{2} > 0$ small enough such that $\beta_{1}\lambda_{1} + \hat{\beta}_{2}\lambda_{2} < 0$, and then choose $\bar{\beta}_{2} > 0$ such that $\beta_{2} \sqrt{x} \leq \bar{\beta}_{2} + \hat{\beta}_{2}x$ for all $x \geq 0$. The first equation in~\eqref{2factortilde} then yields
    \begin{align}
        \EE(\sigma_{t \wedge S_{M}}) \leq \beta_{0} + \bar{\beta}_{2} + \EE(\beta_{1}R_{1, t \wedge S_{M} } + \hat{\beta}_{2} R_{2, t \wedge S_{M}}) \label{2factortildesigmabound}
    \end{align}
    and we can focus on bounding the last expectation. Using the last two equations in~\eqref{2factortilde}, we have 
    \begin{align}
        \EE(\beta_{1}R_{1, t \wedge S_{M} } + \hat{\beta}_{2} R_{2, t \wedge S_{M}}) = \beta_{1}R_{1,0 } + \hat{\beta}_{2} R_{2,0} + & \EE \Big( \int_{0}^{\infty} \Big\{(\beta_{1}\lambda_{1} + \hat{\beta}_{2}\lambda_{2})\sigma^{2}_{s} \nonumber \\ 
          & - \lambda_{1}\beta_{1}R_{1,s} - \lambda_{2}\hat{
        \beta}_{2}R_{2,s} \Big\} \1_{s \leq t \wedge S_{M}} ds \Big). \label{eq:2factorItoExpBoundtilde}
    \end{align}
Defining $\alpha := \beta_{1}\lambda_{1} + \hat{\beta}_{2}\lambda_{2} < 0$, the term under the integral is
    \begin{align*}
        \alpha\sigma^{2}_{s} - \lambda_{1}\beta_{1}R_{1,s} - \lambda_{2}\hat{
        \beta}_{2}R_{2,s} 
        & = \alpha\beta^{2}_{0} + \alpha \beta^{2}_{2}R_{2,s} + \alpha\beta_{1}^{2}R_{1,s}^{2} + 2\alpha\beta_{0}\beta_{1}R_{1,s} \\
        & \quad\, + 2\alpha\beta_{0}\beta_{2}\sqrt{R_{2,s}} + 2\alpha\beta_{1}\beta_{2}R_{1,s}\sqrt{R_{2,s}} - \lambda_{1}\beta_{1}R_{1,s} - \lambda_{2}\hat{
        \beta}_{2}R_{2,s} \\
        & = C + B\sqrt{R_{2,s}} + AR_{2,s}
    \end{align*}
    where
    \begin{align*}
        C & = \alpha\beta^{2}_{0} + \alpha\beta_{1}^{2}R_{1,s}^{2} + 2\alpha\beta_{0}\beta_{1}R_{1,s} - \lambda_{1}\beta_{1}R_{1,s} \text{ ,} \\
        B & = 2\alpha\beta_{2}(\beta_{0} + \beta_{1}R_{1,s}) \text{ , } \quad A = \alpha\beta_{2}^{2} - \lambda_{2}\hat{\beta}_{2}
    \end{align*}
  and $A<0$ due to $\alpha<0$ and $\lambda_{2}\hat{\beta}_{2} > 0$. Thus, the term is bounded from above by 
    \begin{align*}
        C - \frac{B^{2}}{4A} & = \alpha\beta^{2}_{0} + \alpha\beta_{1}^{2}R_{1,s}^{2} + 2\alpha\beta_{0}\beta_{1}R_{1,s} - \lambda_{1}\beta_{1}R_{1,s} - \frac{4\alpha^{2}\beta^{2}_{2}(\beta_{0}^{2} + \beta_{2}^{2}R_{1,s}^{2} + 2\beta_{0}\beta_{1}R_{1,s})}{4(\alpha\beta_{2}^{2} - \lambda_{2}\hat{\beta}_{2})} \\
        & = C' + B'R_{1,s} + A'R_{1,s}^{2}
    \end{align*}
    with
    \begin{align*}
        C' = \alpha\beta^{2}_{0} + \frac{\alpha^{2}\beta^{2}_{0}\beta^{2}_{2}}{\lambda_{2}\hat{\beta}_{2} - \alpha\beta_{2}^{2}}, &  \quad B' = 2\alpha\beta_{0}\beta_{1} - \lambda_{1}\beta_{1} + \frac{2\alpha^{2}\beta^{2}_{2}\beta_{0}\beta_{1}}{\lambda_{2}\hat{\beta}_{2} -\alpha\beta_{2}^{2}},\\
        A' & = \alpha\beta_{1}^{2} + \frac{\alpha^{2}\beta_{1}^{2}\beta_{2}^{2}}{\lambda_{2}\hat{\beta}_{2} -\alpha\beta_{2}^{2}}.
    \end{align*}
    Here again $A' < 0$ due to $\alpha<0$ and $\lambda_{2}\hat{\beta}_{2} > 0$,
so that the term is bounded from above by the constant $L:=C' - \frac{B'^{2}}{4A'}$. Using this bound in~\eqref{eq:2factorItoExpBoundtilde} yields
    \begin{align*}
        \EE(\beta_{1}R_{1, t \wedge S_{M} } + \hat{\beta}_{2} R_{2, t \wedge S_{M}}) & \leq \beta_{1}R_{1,0 } + \hat{\beta}_{2} R_{2,0} + \EE\left( \int_{0}^{\infty} L \1_{s \leq t \wedge S_{M}} ds \right) \\
        & \leq \beta_{1}R_{1,0 } + \hat{\beta}_{2} R_{2,0} + |L|t
    \end{align*}
    and thus
    \begin{align*}
        \EE(\sigma_{t \wedge S_{M}}) \leq \beta_{0} + \bar{\beta}_{2} + \beta_{1}R_{1,0 } + \hat{\beta}_{2}R_{2,0} + |L|t =: K_{0} + K_{1}t
    \end{align*}
    as claimed.
\end{proof}

\begin{remark}\label{2factortildegen} 
  \Cref{pr:trueMart2} and its proof generalize to volatility models having the same dynamics as~\eqref{2factor} for $(R_{1,t},R_{2,t})$ but a more general functional $\sigma_{t} = f(R_{1,t},R_{2,t})$, where $f$ satisfies the conditions of \cref{2factorgen} and in addition there exist constants $L_{1},L_{2},L_{3},L\in\RR$ such that
\begin{gather*}
    f(x,y) \leq L_{0} + L_{1}x + L_{2}y, \\
    (L_{1}\lambda_{1} + L_{2}\lambda_{2})f(x,y)^{2} - \lambda_{1}L_{1}x - \lambda_{2}L_{2}y \leq L.
\end{gather*}
The constant $L_{1}$ typically needs to be negative, as in~\eqref{2factor} where $L_{1}=\beta_{1}<0$.
\end{remark}

Our final remark details why the above proof does not extend to the 4-factor model. 

\begin{remark}\label{rk:tiltedPDV4}
  The general line of argument given above may extend to~\eqref{4factor}. Indeed, the following SDE under $\PP$ has a unique strong solution $(\sigma_{s})_{s\geq0}$ up to a possible time of explosion:
\begin{align}
	\sigma_{t} & = \beta_{0} + \beta_{1}R_{1,t} + \beta_{2}\sqrt{R_{2,t}} \nonumber \\
	R_{1,t} & = (1 - \theta_{1})R_{1,0,t} + \theta_{1}R_{1,1,t} \nonumber \\
	R_{2,t} & = (1 - \theta_{2})R_{2,0,t} + \theta_{2}R_{2,1,t}\label{4factortilde}\tag{\mbox{$\text{4-PDV}{\tilde{\phantom{e}}}$}} \\
	dR_{1,j,t} & = \lambda_{1,j}\sigma_{t}dW_{t} +  \lambda_{1,j}( \sigma^{2}_{t} - R_{1,j,t})dt, \quad j \in \{0,1\} \nonumber \\
	dR_{2,j,t} & = \lambda_{2,j}\left(\sigma_{t}^{2} - R_{2,j,t} \right) dt, \quad j \in \{0,1\}.\nonumber 
\end{align}
   Given $C\geq0$, define $\tau=\inf\{t\geq0: \sigma_{s}<-C\}$. If the process $(\sigma_{s\wedge \tau})_{s\geq0}$ a.s.\ does not explode in finite time, then the assertion of \cref{martprop} extends to~\eqref{4factor}. However, the proof for non-explosiveness given in \cref{le:tiltedPDV2} does not extend directly to the present setting.
   
   Indeed, the system~\eqref{4factortilde} satisfies bounds analogous to~\eqref{2factortildeprobineq} and~\eqref{2factortildesigmabound}. Applying the same procedure as in the proof of  \cref{le:tiltedPDV2}, and recalling the notation used in~\eqref{4factorsigmaeq}, 
   \begin{align*}
       \EE(\beta_{1}R_{1, t \wedge S_{M} } + \hat{\beta}_{2} R_{2, t \wedge S_{M}}) = & \beta_{1}R_{1,0 } + \hat{\beta}_{2} R_{2,0} + \EE \Big( \int_{0}^{\infty} \Big\{(\beta_{1}\bar{\lambda}_{1} + \hat{\beta}_{2}\bar{\lambda}_{2})\sigma^{2}_{s} \nonumber \\ 
        & - \lambda_{1,1}\theta_{1}\beta_{1}R_{1,1,s} -\lambda_{1,0}(1 - \theta_{1})\beta_{1}R_{1,0,s} \nonumber \\
        & - \lambda_{2,1} \theta_{2}\hat{
        \beta}_{2}R_{2,1,s} - \lambda_{2,0}(1 - \theta_{2})\hat{
        \beta}_{2}R_{2,0,s} \Big\} \1_{s \leq t \wedge S_{M}} ds \Big).
   \end{align*}
   The integrand is bounded from above by a quadratic form in $(R_{1,0,s}, R_{1,1,s}, \sqrt{R_{2,0,s}}, \sqrt{R_{2,1,s}})$; however, the matrix defining the form is not negative definite. Hence, we cannot  bound it uniformly as we did in \cref{le:tiltedPDV2}.
\end{remark}

 \newcommand{\dummy}[1]{}

\end{document}